\newtheorem{theorem}{Theorem}[section]
\newtheorem{lemma}[theorem]{Lemma}
\newtheorem{definition}[theorem]{Definition}
\newtheorem{claim}[theorem]{Claim}
\theoremstyle{definition}
\newtheorem{example}[theorem]{Example}
\newtheorem{remark}[theorem]{Remark}
\newcommand{\bmx}{{\bm x}}
\newcommand{\bmy}{{\bm y}}
\newcommand{\bmw}{{\bm w}}
\newcommand{\bmone}{{\bm 1}}
\newcommand{\bbR}{\mathbb{R}}
\newcommand{\E}{\mathbb{E}}
\newcommand{\bbP}{\mathbb{P}}
\newcommand{\mc}[1]{\mathcal{#1}}
\newcommand{\ip}[1]{\langle{ #1 \rangle} }
\newcommand{\argmax}{\mathop{\mathrm{argmax}}}
\title{Sparsification of Decomposable Submodular Functions\footnote{An extended abstract of this work will appear in the \emph{Proceedings of the Thirty-Sixth AAAI Conference on Artificial Intelligence (AAAI 2022)}.}}
\author {
    Akbar Rafiey\thanks{Department of Computing Science, Simon Fraser University, Burnaby, BC, Canada. {Email: \texttt{arafiey@sfu.ca}}} \and
    Yuichi Yoshida \thanks{National Institute of Informatics, Tokyo, Japan. {Email: \texttt{yyoshida@nii.ac.jp}}}
}
\date{}
\begin{document}

\maketitle

\begin{abstract}
    Submodular functions are at the core of many machine learning and data mining tasks.
    The underlying submodular functions for many of these tasks are decomposable, i.e., they are sum of several simple submodular functions. In many data intensive applications, however, the number of underlying submodular functions in the original function is so large that we need prohibitively large amount of time to process it and/or it does not even fit in the main memory.
    To overcome this issue, we introduce the notion of sparsification for decomposable submodular functions whose objective is to obtain an accurate approximation of the original function that is a (weighted) sum of only a few submodular functions.
    Our main result is a polynomial-time randomized sparsification algorithm such that the expected number of functions used in the output is independent of the number of underlying submodular functions in the original function.
    We also study the effectiveness of our algorithm under various constraints such as matroid and cardinality constraints.
    We complement our theoretical analysis with an empirical study of the performance of our algorithm.
\end{abstract}

\newpage

\section{Introduction}
Submodularity of a set function is an intuitive diminishing returns property, stating that adding an element to a smaller set helps gaining more return than adding it to a larger set. This fundamental structure has emerged as a very beneficial property in many combinatorial optimization problems arising in machine learning, graph theory, economics, game theory, to name a few. 
Formally, a set function $f\colon 2^E \to \mathbb{R}$ is \emph{submodular} if for any $S \subseteq T \subseteq E$ and $e \in E\setminus T$ it holds that
\[
    f(S \cup \{e \}) - f(S) \geq f(T \cup \{e\}) - f(T).
\]

Submodularity allows one to efficiently find provably (near-)optimal solutions. In particular, a wide range of efficient approximation algorithms have been developed for maximizing or minimizing submodular functions subject to different constraints. Unfortunately, these algorithms require number of function evaluations which in many data intensive applications are infeasible or extremely inefficient. Fortunately, several submodular optimization problems arising in machine learning have structure that allows solving them more efficiently. A novel class of submodular functions are \textit{decomposable} submodular functions. These are functions that can be written as sums of several ``simple'' submodular functions, i.e.,
\[
    F(S) = \sum_{i=1}^N f_i(S) \quad \quad \forall S\subseteq E,
\]
where each $f_i\colon 2^E \to \mathbb{R}$ is a submodular function on the ground set $E$ with $|E|=n$.

Decomposable submodular functions encompass many of the examples of submodular functions
studied in the context of machine learning as well as economics. For example, they are extensively used in economics in the problem of welfare maximization in combinatorial auctions~\cite{DobzinskiS06,Feige06,FeigeV06,PapadimitriouSS08,Vondrak08}.
\begin{example}[Welfare maximization]
Let $E$ be a set of $n$ resources and $a_1,\dots,a_N$ be $N$ agents. Each agent has an interest over subsets of resources which is expressed as a submodular function $f_i:2^E\to \bbR$. The objective is to select a small subset of resources that maximizes the happiness across all the agents, the ``social welfare''. More formally, the goal is to find a subset $S\subseteq E$ of size at most $k$ that maximizes $F(S)=\sum_{i=1}^N f_i(S)$, where $k$ is a positive integer.
\end{example}
Decomposable submodular functions appear in various machine learning tasks such as data summarization, where we seek a representative subset of elements of small size. This has numerous applications, including exemplar-based clustering~\cite{DueckF07,GomesK10}, image summarization~\cite{TschiatschekIWB14}, recommender systems~\cite{ParambathUG16}, and document and corpus summarization \cite{lin2011class}. The problem of maximizing decomposable submodular functions has been studied under different constraints such as cardinality and matroid constraints in various data summarization settings \cite{MirzasoleimanBK16,MirzasoleimanKS16,MirzasoleimanZK16}, and differential privacy settings \cite{ChaturvediNZ21,MitrovicB0K17,RafieyY20}.


In many of these applications, the number of underlying submodular functions are too large (i.e., $N$ is too large) to even fit in the main memory, and building a compressed representation that preserves relevant properties of the submodular function is appealing. This motivates us to find a \emph{sparse} representation for a decomposable submodular function $F$. Sparsification is an algorithmic paradigm where a dense object is replaced by a sparse one with similar ``features'', which often leads to significant improvements in efficiency of algorithms, including running time, space complexity, and communication. In this work, we propose a simple and very effective algorithm that yields a sparse and accurate representation of a decomposable submodular function. To the best of our knowledge this work is the first to study sparsification of decomposable submodular functions.


\subsection{Our contributions}

\paragraph{General setting.} Given a decomposable submodular function $F=\sum_{i=1}^N f_i$, we present a randomized algorithm that yields a sparse representation that approximates $F$. Our algorithm chooses each submodular function $f_i$ with probability proportional to its ``importance'' in the sum $\sum_{i=1}^N f_i$ to be in the sparsifier. Moreover, each selected submodular function will be assigned a weight which also is proportional to its ``importance''. We prove this simple algorithm yields a sparsifier of small size (\emph{\textbf{independent of $N$}}) with a very good approximation of $F$. Let $|\mc{B}(f_i)|$ denote the number of extreme points in the base polytope of $f_i$, and $B=\max_{i\in [N]} |\mc{B}(f_i)|$.
\begin{theorem}\label{thm:main-body-S}
    Let $F=\sum_{i=1}^N f_i$ be a decomposable submodular function. For any $\epsilon >0$, there exists a vector $\bmw\in \bbR^N$ with at most $O(\frac{B \cdot n^2}{\epsilon^2})$ non-zero entries such that for the submodular function $F'=\sum_{i=1}^N \bmw_i f_i$ we have
    \[
        (1- \epsilon)F'(S)\leq F(S)\leq (1+ \epsilon)F'(S) \quad \forall S\subseteq E.
    \]
     Moreover, if all $f_i$'s are monotone, then there exists a polynomial-time randomized algorithm that outputs a vector $\bmw\in \bbR^N$ with at most $O(\frac{B\cdot n^{2.5} \log{n}}{\epsilon^2})$ non-zero entries in expectation such that for the submodular function $F'=\sum_{i=1}^N \bmw_i f_i$, with high probability, we have
    \[
        (1- \epsilon)F'(S)\leq F(S)\leq (1+ \epsilon)F'(S) \quad \forall S\subseteq E.
    \]
\end{theorem}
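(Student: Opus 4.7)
The plan is to prove the theorem via an importance sampling scheme: for each $i$, keep $f_i$ independently with probability $p_i\in[0,1]$ and, if kept, reweight it by $1/p_i$, producing $F'(S)=\sum_i(X_i/p_i)f_i(S)$, which is an unbiased estimator of $F(S)$ for every $S\subseteq E$. Since the expected number of surviving functions equals $\sum_i p_i$, the task reduces to choosing the smallest possible $p_i$'s for which $F'(S)$ stays within a $(1\pm\epsilon)$ factor of $F(S)$ simultaneously on all $2^n$ subsets.

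For a single fixed $S$, Bernstein's inequality applied to $F'(S)$ yields a tail bound of the form $\Pr[|F'(S)-F(S)|>\epsilon F(S)]\le 2\exp(-c\epsilon^2/\max_i(f_i(S)/(p_i F(S))))$ after upper-bounding the variance $\sum_i f_i(S)^2/p_i$ by $\max_i(f_i(S)/p_i)\cdot F(S)$. To survive a union bound over the $2^n$ possible $S$, this failure probability must be at most $e^{-\Omega(n)}$, which forces $p_i\ge Cn\,f_i(S)/(\epsilon^2 F(S))$ for every $S$ and some absolute constant $C$. Defining the importance $\tau_i:=\max_{S\subseteq E}f_i(S)/F(S)$ and taking $p_i=\min(1,Cn\tau_i/\epsilon^2)$ therefore secures simultaneous concentration. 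The crux is the budget bound $\sum_i\tau_i=O(Bn)$, which I would establish using the base polytope characterization $f_i(S)=\max_{v\in\mc{B}(f_i)}\langle v,\bmone_S\rangle$, attained by the extreme point produced by the greedy algorithm that orders the elements of $S$ before those of $E\setminus S$. Since each $f_i$ is thereby the pointwise maximum of at most $B$ linear functions, the ratio $f_i(S)/F(S)$ can be attributed to a single ``responsible'' extreme point at each $S$; a double counting argument across the $\le B$ extreme points per function and the $n$ coordinates of $E$ should then yield $\sum_i\tau_i\le Bn$. Combined with the Bernstein bound, this gives existence of $\bmw$ with at most $O(Bn^2/\epsilon^2)$ non-zero entries.

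For the polynomial-time monotone version, the obstacle is that $\tau_i$ is a maximum over exponentially many subsets and is not directly computable, so I would replace it by an efficiently computable upper surrogate such as $\tilde\tau_i:=\sum_{e\in E}f_i(\{e\})/F(\{e\})$. For monotone submodular $f_i$, subadditivity combined with the monotonicity of $F$ gives $f_i(S)/F(S)\le\tilde\tau_i$ for every $S$, and swapping the order of summation yields $\sum_i\tilde\tau_i=n$. The additional $\sqrt{n}\log n$ slack in the algorithmic bound relative to the existential one should arise from having to work with this coarser surrogate, from loosening Bernstein to a Chernoff/Hoeffding-style tail in order to avoid an unknown variance term, and from the logarithmic overhead standard in high-probability randomized sampling together with the extra factor needed to argue that the \emph{expected} sparsifier size (rather than a fixed size) is bounded.

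The technically delicate step is the existential estimate $\sum_i\tau_i=O(Bn)$ when the $f_i$ are allowed to be non-monotone, since the extreme points of $\mc{B}(f_i)$ may then have negative coordinates. The per-extreme-point score used in the double counting must be designed so that it (a) dominates $f_i(S)/F(S)$ uniformly in $S$ even when $F(S)$ is small compared with $F(E)$, and (b) telescopes across $i\in[N]$ and $e\in E$ to a quantity linear in $nB$. This is where the structural identity $v(E)=f_i(E)$ for every $v\in\mc{B}(f_i)$ does the heavy lifting, and it is the portion of the argument I expect to require the most care.
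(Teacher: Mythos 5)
Your overall strategy coincides with the paper's: sample each $f_i$ with probability proportional to $p_i=\max_{S\subseteq E}f_i(S)/F(S)$, reweight by the inverse probability so that $F'(S)$ is unbiased, apply a concentration inequality plus a union bound over the $2^n$ sets (the paper uses a Chernoff bound for variables in $[0,a]$ and shows $a\le F(S)/\kappa$, which is exactly where your Bernstein computation lands), and control the expected size via the budget bound $\sum_i p_i\le Bn$. That budget bound is the crux of the existential statement, and you only gesture at it. The double counting you describe is indeed how the paper proves it: write $f_i(A)=\max_{\bmy\in\mc{B}(f_i)}\langle\bmy,\bmone_A\rangle$, upper-bound the max by the sum over the at most $B$ extreme points, lower-bound each $f_j(A)$ in the denominator by the average over $\mc{B}(f_j)$, reduce the resulting ratio of sums over $e\in A$ to a maximum over single coordinates, and then relax that maximum to a sum over $e\in E$ and swap the order of summation. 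So your sketch is on the right track, but you explicitly leave unresolved the case where extreme points have negative coordinates, and several of the inequalities above (e.g.\ bounding a max of inner products by their sum, and the mediant-type step) do require nonnegativity; as written, this is a genuine gap rather than a routine verification, and it is precisely the step you yourself flag as delicate.

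Your polynomial-time part, by contrast, takes a genuinely different route from the paper. The paper obtains computable surrogates $\hat p_i$ with $p_i\le\hat p_i\le O(\sqrt n\log n)\,p_i$ by invoking the ratio-of-monotone-submodular-functions approximation algorithm of \citet{BaiIWB16}; that approximation factor is the \emph{sole} source of the extra $n^{0.5}\log n$ in the algorithmic bound. Your surrogate $\tilde\tau_i=\sum_{e\in E}f_i(\{e\})/F(\{e\})$ avoids that machinery entirely: for normalized monotone $f_i$ it dominates $p_i$ by subadditivity of $f_i$ and monotonicity of $F$, it satisfies $\sum_i\tilde\tau_i=n$ exactly, and since oversampling only helps the concentration step, rerunning the argument with $\tilde\tau_i$ in place of $p_i$ gives a polynomial-time sparsifier of expected size $O(n^2/\epsilon^2)$ --- stronger than the stated $O(Bn^{2.5}\log n/\epsilon^2)$ and with no dependence on $B$. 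Your own accounting of where the $\sqrt n\log n$ slack comes from is therefore off: in your scheme that factor never needs to appear, and the coarseness of the surrogate costs nothing because $\sum_i\tilde\tau_i$ is already as small as $\sum_i p_i$ could be hoped to be.
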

\begin{remark}[Tightness]
The existential result is almost tight because in the special case of directed graphs, we have $\max_i |\mc{B}(f_i)| = 2$ and it is known that we need $\Omega(n^2)$ edges to construct a sparsifier~\cite{Cohen2017}. 
\end{remark}

\paragraph{Sparsifying under constraints.} We consider the setting where we only are interested in evaluation of $F$ on particular sets. For instance, the objective is to optimize $F$ on subsets of size at most $k$, or it is to optimize $F$ over subsets that form a matroid. Optimizing submodular functions under these constraints has been extensively studied and has an extremely rich theoretical landscape. Our algorithm can be tailored to these types of constraints and constructs a sparsifier of even smaller size.

\begin{theorem}\label{thm:main-body-K-S}
    Let $F=\sum_{i=1}^N f_i$ be a decomposable submodular function. For any $\epsilon >0$ and a matroid $\mc{M}$ of rank $r$, there exists a vector $\bmw\in \bbR^N$ with at most $O(\frac{B\cdot r\cdot n}{\epsilon^2})$ non-zero entries such that for the submodular function $F'=\sum_{i=1}^N \bmw_i f_i$ we have
    \[
        (1- \epsilon)F'(S)\leq F(S)\leq (1+ \epsilon)F'(S) \quad \forall S\subseteq \mc{M}.
    \]
    Moreover, if all $f_i$'s are monotone, then there exists a polynomial-time randomized algorithm that outputs a vector $\bmw\in \bbR^N$ with at most $O(\frac{B\cdot r\cdot n^{1.5} \log n}{\epsilon^2})$ non-zero entries in expectation such that for the submodular function $F'=\sum_{i=1}^N \bmw_i f_i$, with high probability, we have
    \[
        (1- \epsilon)F'(S)\leq F(S)\leq (1+ \epsilon)F'(S) \quad \forall S\subseteq \mc{M}.
    \]
\end{theorem}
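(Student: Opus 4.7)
I plan to mirror the sampling-and-concentration argument used for Theorem~\ref{thm:main-body-S}, while restricting all relevant quantities to subsets that are independent in $\mathcal{M}$. This restricts the universe of test sets from $2^E$ down to $\mathcal{I}(\mathcal{M})$, the family of independent sets of $\mathcal{M}$, which has cardinality at most $\binom{n}{\le r}$, so that a union bound over it costs only $O(r \log(n/r))$ instead of $O(n)$.

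The first step is to define matroid-aware importance scores $\tau_i^{\mathcal{M}}$ for each $f_i$ by taking the corresponding supremum (or extreme-point functional coming from the base polytope $\mathcal{B}(f_i)$) only over $S \in \mathcal{I}(\mathcal{M})$. Because every independent set has size at most $r$, a matroid-rank-based argument should cap the total importance mass $\sum_i \tau_i^{\mathcal{M}}$ by an expression that scales with $r$ rather than $n$. Together with the smaller union bound, these two savings are exactly what converts one factor of $n$ in the existential sparsifier size of Theorem~\ref{thm:main-body-S} into a factor of $r$.

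With those scores in hand, I sample $f_i$ into $F'$ with probability $p_i \propto \tau_i^{\mathcal{M}}$ and reweight survivors by $1/p_i$, so that $\E[F'(S)] = F(S)$ for every $S$. For any fixed $S \in \mathcal{I}(\mathcal{M})$, Bernstein's inequality, applied to $\sum_i (w_i/p_i)f_i(S)$ after plugging in the truncation implicit in the definition of $\tau_i^{\mathcal{M}}$, gives $|F'(S)-F(S)| \le \epsilon F(S)$ with sufficiently small failure probability, and the union bound over $\mathcal{I}(\mathcal{M})$ delivers the uniform guarantee simultaneously for all $S \subseteq \mathcal{M}$. For the algorithmic statement with monotone $f_i$, I would replace exact scores by sampling-based estimates exactly as in Theorem~\ref{thm:main-body-S}; the same $O(\sqrt{n}\log n)$ slack one pays there reappears here and produces the additional $n^{0.5}\log n$ factor in the matroid bound.

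The main obstacle will be the matroid-aware importance analysis: while $|S|\le r$ is immediate, turning this cardinality bound into the claimed improvement requires carefully accounting for how each $f_i$'s base-polytope extreme points distribute their mass across elements of $E$, including elements lying outside any particular independent set $S$. I expect to handle this by working with the matroid polytope of $\mathcal{M}$ together with the support structure of $\mathcal{B}(f_i)$, so that the rank $r$ is ``paid'' once per extreme point rather than once per ground-set element; this is the place where the argument genuinely departs from the proof of Theorem~\ref{thm:main-body-S}.
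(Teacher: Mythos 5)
Your overall skeleton (restrict the importance scores to independent sets, sample each $f_i$ with probability proportional to its score, apply a concentration bound per set, and union bound over the independent sets) is exactly the paper's approach, but your accounting of where the improvement comes from contains a genuine error, and the step you single out as the ``main obstacle'' is one you cannot carry out. You claim that the total importance mass $\sum_i \tau_i^{\mathcal{M}}$ can be capped by a quantity that ``scales with $r$ rather than $n$,'' to be proved by paying the rank $r$ once per extreme point of $\mathcal{B}(f_i)$. No such bound holds. Take $E=\{e_1,\dots,e_n\}$, $N=n$, $f_i(A)=|A\cap\{e_i\}|$ (modular, so $|\mathcal{B}(f_i)|=1$), and let $\mathcal{M}$ be the uniform matroid of rank $1$: then $\tau_i^{\mathcal{M}}=f_i(\{e_i\})/F(\{e_i\})=1$ for every $i$, so $\sum_i\tau_i^{\mathcal{M}}=n$ even though $r=B=1$. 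The correct observation is the opposite of a new lemma: since the maximum defining $p_i$ is now taken over the smaller family $\mathcal{I}\subseteq 2^E$, each constrained $p_i$ is at most the unconstrained one, so Claim~\ref{claim:up-p_i} applies verbatim and still gives $\sum_i p_i\le nB$ --- this factor of $n$ is kept, not improved. The entire conversion of the \emph{other} factor of $n$ into $r$ comes from the union bound alone: a rank-$r$ matroid has at most $n^{r+1}$ independent sets, so the oversampling parameter can be set to $\kappa=O(\log(n^{r+1}/\delta)/\epsilon^2)=O(r\log n/\epsilon^2)$ instead of $O(n/\epsilon^2)$, and the expected size is $\kappa\sum_i p_i=O(Brn\log n/\epsilon^2)$. (Had your claimed $O(rB)$ mass bound been true, you would obtain $O(Br^2\log n/\epsilon^2)$, strictly better than the theorem --- a sign that the accounting is off.)

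A second, smaller gap concerns the algorithmic statement: you say you would estimate the scores ``exactly as in Theorem~\ref{thm:main-body-S}.'' The ratio-approximation result used there (Theorem~\ref{thm:unconstrained-pi-approximation}) handles the unconstrained maximum of $f(S)/g(S)$; here you need $\max_{S\in\mathcal{I}}f(S)/g(S)$, which requires reworking the bicriterion subroutine to respect the matroid constraint. The paper does this (Theorem~\ref{thm:matroid-pi-approximation}) by reducing to monotone submodular maximization subject to the intersection of a matroid constraint and a knapsack constraint, which admits a constant-factor approximation. The $O(\sqrt{n}\log n)$ slack you anticipate is then correct, but it does not transfer for free from the unconstrained case.
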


\paragraph{Applications, speeding up maximization/minimization.} 
Our sparsifying algorithm can be used as a preprocessing step in many settings in order to speed up algorithms. To elaborate on this, we consider the classical greedy algorithm of~\citeauthor{NemhauserWF78} for maximizing monotone submodular functions under cardinality constraints. We observe that sparsifying the instance reduces the number of function evaluations from $O(k n N)$ to $O(\frac{B k^2 n^{2}}{\epsilon^2})$, which is a significant speed up when $N\gg n$. Regarding minimization, we prove our algorithm gives an approximation on the \emph{Lov\'{a}sz extension}, thus it can be used as a preprocessing step for algorithms working on Lov\'{a}sz extensions such as the ones in \cite{AxiotisK0SV21,EneNV17}. One particular regime that has been considered in many results is where each submodular function $f_i$ acts on $O(1)$ elements of the ground set which implies $B=\max_i|\mc{B}(f_i)|$ is $O(1)$. 
Using our sparsifier algorithm as a preprocessing step is quite beneficial here. 
For instance, it improves the running time of \citet{AxiotisK0SV21} from $\widetilde{O}(T_{\rm maxflow}(n,n+N)\log{\frac{1}{\epsilon}})$ to $\widetilde{O}(T_{\rm maxflow}(n,n+\frac{n^2}{\epsilon^2})\log{\frac{1}{\epsilon}})$. Here, $T_{\rm maxflow}(n,m)$ denotes the time required to compute the maximum flow in a directed graph of $n$ vertices and $m$ arcs with polynomially bounded integral capacities.

\paragraph{Well-known examples.} In practice, the bounds on the size of sparsifiers are often better than the ones presented in \Cref{thm:main-body-S,thm:main-body-K-S} e.g. $B$ is a constant. We consider several examples of decomposable submodular functions that appear in many applications, namely, \textsc{Maximum Coverage}, \textsc{Facility Location}, and \textsc{Submodular Hypergraph Min Cut} problems. For the first two examples, sparsifiers of size $O(\frac{n^2}{\epsilon^2})$ can be constructed in time linear in $N$. For \textsc{Submodular Hypergraph Min Cut} when each hyperedge is of constant size sparsifiers of size $O(\frac{n^2}{\epsilon^2})$ exist, and in several specific cases with various applications efficient algorithms are employed to construct them. 
\paragraph{Empirical results.} Finally, we empirically examine our algorithm and demonstrate that it constructs a concise sparsifier on which we can efficiently perform algorithms.


\subsection{Related work}
To the best of our knowledge there is no prior work on sparsification algorithms for decomposable submodular functions. However, special cases of this problem have attracted much attention, most notably \emph{cut sparsifiers} for graphs. The cut function of a graph $G=(V,E)$ can be seen as a decomposable submodular function $F(S) = \sum_{e \in E}f_e$, where $f_e(S) = 1$ if and only if $e \cap S \neq \emptyset$  and $e \cap (V \setminus S) \neq \emptyset$.
The problem of sparsifying a graph while approximately preserving its cut structure has been extensively studied,
(See~\citet{AhnGM12,AhnGM13,BansalST19,BenczurK15} and references therein.)
The pioneering work of~\citet{BenczurK96} showed for any graph $G$ with $n$ vertices one can construct a weighted subgraph $G'$ in nearly linear time with $O(n\log n/\epsilon^2)$ edges such that the weight of every cut in $G$ is preserved  within a multiplicative $(1 \pm \epsilon)$-factor in $G'$. Note that a graph on $n$ vertices can have $N=\Omega(n^2)$ edges. The bound on the number of edges was later improved to $O(n/\epsilon^2)$~\cite{BatsonSS12} which is tight~\cite{AndoniCKQWZ16}. 

A more general concept for graphs called \emph{spectral sparsifier} was introduced by~\citet{SpielmanT11}. This notion captures the spectral similarity between a graph and its sparsifiers. A spectral sparsifier approximates the \textit{quadratic form of the Laplacian} of a graph. Note that a spectral sparsifier is also a cut sparsifier. This notion has numerous applications in linear algebra~\cite{Mahoney11,LiMP13,CohenLMMPS15,LeeS14}, and it has been used to design efficient approximation algorithms related to cuts and flows~\cite{BenczurK15,KargerL02,Madry10}. Spielman and Teng's sparsifier has $O(n\log^c n)$ edges for a large constant $c>0$ which was improved to $O(n/\epsilon^2)$~\cite{Lee018}.

In pursuing a more general setting, the notions of cut sparsifier and spectral sparsifier have been studied for \emph{hypergraphs}. Observe that a hypergraph on $n$ vertices can have exponentially many hyperedges i.e., $N = \Omega(2^n)$. For hypergraphs, \citet{KoganK15} provided a polynomial-time algorithm that constructs an $\epsilon$-cut sparsifier with $O(n(r + \log n)/\epsilon^2)$ hyperedges where $r$ denotes the maximum size of a hyperedge. The current best result is due to \citet{ChenKN20} where their $\epsilon$-cut sparsifier uses $O(n\log n/\epsilon^2)$ hyperedges and can be constructed in time $O(Nn^2+n^{10}/\epsilon^2)$ where $N$ is the number of hyperedges.
Recently, \citet{Soma2019} initiated the study of spectral sparsifiers for hypergraphs and showed that every hypergraph admits an $\epsilon$-spectral sparsifier with $O(n^3\log n/\epsilon^2)$ hyperedges. For the case where the maximum size of a hyperedge is $r$, \citet{BansalST19} showed that every hypergraph has an $\epsilon$-spectral sparsifier of size $O(nr^3\log n/\epsilon^2)$. Recently, this bound has been improved to $O(nr(\log n/\epsilon)^{O(1)})$ and then to $O(n(\log n/\epsilon)^{O(1)})$ ~\cite{KapralovKTY2021b,KapralovKTY2021}.
This leads to the study of sparsification of submodular functions which is the focus of this paper and provides a unifying framework for these previous works.

\section{Preliminaries}\label{sec:pre}
For a positive integer $n$, let $[n] = \{1,2,\ldots,n\}$.
Let $E$ be a set of elements of size $n$ which we call the \emph{ground set}. For a set $S \subseteq E$, $\bmone_S \in \mathbb{R}^E$ denotes the characteristic vector of $S$.
For a vector $\bmx \in \mathbb{R}^E$ and a set $S \subseteq E$, $\bmx(S)=\sum_{e \in S}\bmx(e)$.

\paragraph{Submodular functions.}
Let $f\colon 2^E \to \mathbb{R}_+$ be a set function.
We say that $f$ is \emph{monotone} if $f(S) \leq f(T)$ holds for every $S \subseteq T \subseteq E$.
We say that $f$ is \emph{submodular} if $f(S \cup \{e \}) - f(S) \geq f(T \cup \{e\}) - f(T)$ holds for any $S \subseteq T \subseteq E$ and $e \in E\setminus T$. The \emph{base polytope} of a submodular function $f$ is defined as
\[
    \mc{B}(f)=\{ \bmy\in \mathbb{R}^E \mid \bmy(S)\leq f(S) ~~ \forall S\subseteq E, \bmy(E)=f(E)\},
\]
and $|\mc{B}(f)|$ denotes the number of \emph{extreme points} in the base polytope $\mc{B}(f)$.

\begin{definition}[$\epsilon$-sparsifier]
    Let $f_i$ ($i\in D$) be a set of $N$ submodular functions, and $F(S)=\sum_{i\in D} f_i(S)$ be a decomposable submodular function.
    A vector $\bmw \in \bbR^N$ is called an $\epsilon$-sparsifier of $F$ if, for the submodular function $F' := \sum_{i\in D} \bmw_i f_i$, the following holds for every $S\subseteq E$
        \begin{align}
        \label{eq:S}
            (1-\epsilon)F'(S) \leq F(S) \leq (1+\epsilon)F'(S).
        \end{align}
    The \emph{size} of an $\epsilon$-sparsifier $\bmw$, $\mathrm{size}(\bmw)$, is the number of indices $i$'s with $\bmw_i\neq 0$.
\end{definition}

\paragraph{Matroids and matroid polytopes.}

A pair $\mathcal{M} = (E,\mc{I})$ of a set $E$ and $\mc{I}\subseteq 2^E$ is called a \emph{matroid} if
\begin{inparaenum}
\item[(1)] $\emptyset \in \mc{I}$,
\item[(2)] $A\in \mc{I}$ for any $A \subseteq B\in \mc{I}$, and
\item[(3)] for any $A,B\in \mc{I}$ with $|A| < |B|$, there exists $e \in B\setminus A$ such that $A\cup \{e\}\in \mc{I}$.
\end{inparaenum}
We call a set in $\mathcal{I}$ an \emph{independent set}. The \emph{rank function} $r_{\mc{M}}\colon 2^E \to \mathbb{Z}_+$ of $\mc{M}$ is $r_{\mc{M}}(S)=\max\{|I|:I\subseteq S, I\in \mc{I}\}$.
An independent set $S\in \mc{I}$ is called a \emph{base} if $r_{\mc{M}}(S)=r_{\mc{M}}(E)$. We denote the rank of $\mc{M}$ by $r(\mc{M})$.
The \emph{matroid polytope} $\mc{P}(\mc{M}) \subseteq \bbR^E$ of $\mc{M}$ is
$
  \mc{P}(\mc{M})=\mathrm{conv}\{\mathbf{1}_I : I\in \mc{I}\},
$
where $\mathrm{conv}$ denotes the convex hull.
Or equivalently~\cite{edmonds2003submodular},
$
  \mc{P}(\mc{M})=\left\{\bmx\geq \mathbf{0} : \bmx(S)\leq r_{\mc{M}}(S) \; \forall S\subseteq E \right\}.
$

\paragraph{Concentration bound.}
We use the following concentration bound:
\begin{theorem}[Chernoff bound, see e.g.~\citet{bookMotwaniR95}]\label{lem:Chernoff}
    Let $X_1,\dots,X_n$ be independent random variable in range $[0,a]$. Let $T=\sum_{i=1}^n X_i$. Then for any $\epsilon\in [0,1]$ and $\mu\geq \E[T]$,
    \[
        \bbP\left[ |T-\E[T]| \geq \epsilon\mu\right] \leq 2\exp{\left(-\nicefrac{\epsilon^2\mu}{3a}\right)}.
    \]
\end{theorem}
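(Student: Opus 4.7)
The canonical route is the Cram\'er--Chernoff (exponential moment) method: for any $t>0$, Markov's inequality applied to the nonnegative random variable $e^{tT}$ gives
\[
    \bbP\bigl[T-\E[T] \geq \epsilon\mu\bigr] \;\leq\; \bbP\bigl[e^{tT}\geq e^{t(\E[T]+\epsilon\mu)}\bigr] \;\leq\; e^{-t(\E[T]+\epsilon\mu)}\,\E\bigl[e^{tT}\bigr],
\]
and independence lets us factor $\E[e^{tT}]=\prod_{i=1}^n \E[e^{tX_i}]$. I would treat the upper tail first; the lower tail is obtained by an essentially identical argument applied to $-X_i$ (equivalently, $a-X_i$), and the factor of $2$ in the statement comes from a final union bound across the two tails.

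\textbf{Moment generating factor on $[0,a]$.} The key technical step is to control $\E[e^{tX_i}]$ when $X_i\in[0,a]$. By convexity of $x\mapsto e^{tx}$, for every $x\in[0,a]$ we have $e^{tx}\leq (1-\tfrac{x}{a})\cdot 1+\tfrac{x}{a}e^{ta}$. Taking expectations,
\[
    \E[e^{tX_i}] \;\leq\; 1+\tfrac{\E[X_i]}{a}(e^{ta}-1) \;\leq\; \exp\!\Bigl(\tfrac{\E[X_i]}{a}(e^{ta}-1)\Bigr),
\]
using $1+y\leq e^y$. Multiplying across $i$ yields $\E[e^{tT}] \leq \exp\!\bigl(\tfrac{\E[T]}{a}(e^{ta}-1)\bigr) \leq \exp\!\bigl(\tfrac{\mu}{a}(e^{ta}-1)\bigr)$, where the last step uses $\mu\geq \E[T]$ together with $e^{ta}-1>0$.

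\textbf{Optimize and clean up.} Plugging into the Markov bound gives, for every $t>0$,
\[
    \bbP\bigl[T\geq (1+\epsilon)\mu\bigr] \;\leq\; \exp\!\Bigl(\tfrac{\mu}{a}(e^{ta}-1) - t(1+\epsilon)\mu\Bigr).
\]
Setting $t=\tfrac{1}{a}\ln(1+\epsilon)$ minimizes the exponent and produces the classical bound $\exp\!\bigl(-\tfrac{\mu}{a}\bigl((1+\epsilon)\ln(1+\epsilon)-\epsilon\bigr)\bigr)$. I would then use the elementary inequality $(1+\epsilon)\ln(1+\epsilon)-\epsilon \geq \epsilon^{2}/3$ for $\epsilon\in[0,1]$ (verified by comparing the two sides and their derivatives at $\epsilon=0$) to convert this into $\exp(-\epsilon^2\mu/(3a))$. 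The lower tail, handled analogously, actually admits the stronger bound $\exp(-\epsilon^2\mu/(2a))$, so after a union bound we obtain the stated $2\exp(-\epsilon^2\mu/(3a))$.

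\textbf{Expected obstacle.} Nothing is conceptually hard here; the only point that requires care is the scaling in $a$. Most textbook statements assume $X_i\in[0,1]$ and a real mean rather than an upper bound $\mu\geq \E[T]$. The cleanest path is to apply the $[0,1]$ version to $Y_i:=X_i/a$ with target value $\mu/a$ and then rescale, checking that the monotonicity of the Cram\'er transform in $\mu$ justifies replacing $\E[T]$ by its upper bound $\mu$ without loss.
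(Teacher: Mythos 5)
The paper does not prove this statement at all: it is imported as a black-box textbook fact (cited to Motwani--Raghavan), so there is no in-paper argument to compare against. Your exponential-moment derivation is the standard and appropriate route, and it is correct in outline, but one step as written does not close. In the ``optimize and clean up'' display you silently replace the event $\{T-\E[T]\geq \epsilon\mu\}$ by $\{T\geq(1+\epsilon)\mu\}$; since $\E[T]+\epsilon\mu\leq(1+\epsilon)\mu$, the latter event is \emph{contained in} the one you must bound, so controlling its probability proves nothing about the original event when $\E[T]<\mu$. The underlying cause is that you apply $\E[T]\leq\mu$ too early, to the moment-generating bound alone, where the substitution goes in the unfavorable direction once the centering term $-t\E[T]$ is accounted for. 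The repair is to keep $\E[e^{tT}]\leq\exp\bigl(\tfrac{\E[T]}{a}(e^{ta}-1)\bigr)$ and first combine it with the centering term from your opening Markov inequality, obtaining the exponent
\[
\frac{\E[T]}{a}\bigl(e^{ta}-1-ta\bigr)-t\epsilon\mu,
\]
whose coefficient $e^{ta}-1-ta$ is nonnegative, so that $\E[T]\leq\mu$ may now be applied in the favorable direction; taking $t=\tfrac{1}{a}\ln(1+\epsilon)$ then gives $-\tfrac{\mu}{a}\bigl((1+\epsilon)\ln(1+\epsilon)-\epsilon\bigr)\leq -\epsilon^2\mu/(3a)$ as you intended. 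The same care is needed on the lower tail: the deviation threshold there is $\epsilon\mu$ rather than $\epsilon\E[T]$, so you cannot simply quote the usual $\exp(-\epsilon^2\E[T]/(2a))$ bound (which is weaker than the claimed $\mu$-dependent bound unless $\E[T]\geq 2\mu/3$); the direct argument using $e^{-ta}-1+ta\geq 0$ goes through identically and yields $\exp(-\epsilon^2\mu/(2a))$. Your closing remark about reducing to the $[0,1]$ case and verifying that $\E[T]$ may be replaced by the upper bound $\mu$ is exactly the right instinct and amounts to the same fix; had you carried it out, the proof would be complete.
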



\section{Constructing a sparsifier}
In this section, we propose a probabilistic argument that proves the existence of an accurate sparsifier and turn this argument into an (polynomial-time) algorithm that finds a sparsifier with high probability.

For each submodular function $f_i$, let
    \begin{align}
    \label{eq:p_i}
        p_i = \max\limits_{A\subseteq E}\frac{f_i(A)}{F(A)}. 
    \end{align}
The values $p_i$'s are our guide on how much weight should be allocated to a submodular function $f_i$ and with what probability it might happen. To construct an $\epsilon$-sparsifier of $F$, for each submodular function $f_i$, we assign weight $1/(\kappa\cdot p_i)$ to $\bmw_i$ with probability $\kappa\cdot p_i$ and do nothing for the complement probability $1-\kappa\cdot p_i$ (see \Cref{alg:sparsification}). Here $\kappa$ depends on $n$, $\epsilon$ and $\delta$ where $\delta$ is the failure probability of our algorithm. Observe that, for each $f_i$, the expected weight $\bmw_i$ is exactly one. We show that the expected number of entries of $\bmw$ with $\bmw_i>0$ is $n^2\cdot\max_{i\in D}|\mc{B}(f_i)|$. Let $B=\max_{i\in D}|\mc{B}(f_i)|$ in the rest of the paper.

\begin{algorithm}[t]
  \caption{}\label{alg:sparsification}
  \begin{algorithmic}[1]
    \Require{Submodular functions $f_i$ in dataset $D$ where each $f_i:\{0,1\}^E\to \bbR$, $\epsilon,\delta \in (0,1)$}
    \State{$\bmw\gets \mathbf{0}$}
    \State{$\kappa\gets 3\log (2^{n+1}/\delta)/\epsilon^2$}
    \For{$f_i$ in $D$}
        \State{$p_i\gets \max_{A\subseteq E} f_i(A)/F(A)$}
        \State{$\kappa_i\gets \min\{1, \kappa\cdot p_i\}$}
        \State{$\bmw_i \gets 1/\kappa_i$ with probability $\kappa_i$}\Comment{do nothing with probability $1-\kappa_i$}
    \EndFor
    \State{\Return $\bmw\in \bbR^D$.}
  \end{algorithmic}
\end{algorithm}

\begin{lemma}\label{lem:SS}
\Cref{alg:sparsification} returns $\bmw$ which is an $\epsilon$-sparsifier of $F$ with probability at least $1-\delta$. 
\end{lemma}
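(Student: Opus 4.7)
The plan is to fix an arbitrary target set $S\subseteq E$, show $F'(S)$ concentrates tightly around $F(S)$ via the Chernoff bound in \Cref{lem:Chernoff}, and then union bound over the $2^n$ choices of $S$.

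First I would compute the mean. For each $i$, let $X_i = \bmw_i f_i(S)$, so $F'(S) = \sum_i X_i$. These $X_i$ are independent, and $X_i$ takes value $f_i(S)/\kappa_i$ with probability $\kappa_i$ and $0$ otherwise, so $\E[X_i] = f_i(S)$ and hence $\E[F'(S)] = F(S)$. If $F(S) = 0$, then by nonnegativity every $f_i(S) = 0$ and the sparsifier inequality is trivial, so I may assume $F(S) > 0$.

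Next I would bound the range of each $X_i$. If $\kappa_i = 1$, then $X_i = f_i(S) \leq F(S)$ deterministically. Otherwise $\kappa_i = \kappa\cdot p_i$, and by definition of $p_i$ in \eqref{eq:p_i},
\[
    \frac{f_i(S)}{\kappa_i} = \frac{f_i(S)}{\kappa\cdot p_i} \leq \frac{f_i(S)}{\kappa}\cdot\frac{F(S)}{f_i(S)} = \frac{F(S)}{\kappa},
\]
so every $X_i$ lies in $[0,\, F(S)/\kappa]$ (and $F(S)/\kappa \leq F(S)$, so this range bound covers both cases whenever $\kappa\geq 1$). Applying \Cref{lem:Chernoff} with $a = F(S)/\kappa$ and $\mu = F(S) = \E[F'(S)]$,
\[
    \bbP\bigl[\,|F'(S) - F(S)| \geq \epsilon F(S)\,\bigr] \leq 2\exp\!\left(-\frac{\epsilon^2 F(S)}{3\cdot F(S)/\kappa}\right) = 2\exp(-\epsilon^2\kappa/3).
\]
Plugging in $\kappa = 3\log(2^{n+1}/\delta)/\epsilon^2$ gives a failure probability of at most $\delta/2^n$ for this fixed $S$.

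Finally I would apply a union bound over all $2^n$ subsets $S\subseteq E$, yielding that with probability at least $1-\delta$ the two-sided bound $|F'(S)-F(S)| \leq \epsilon F(S)$ holds simultaneously for every $S$. Rearranging gives $(1-\epsilon)F(S)\leq F'(S) \leq (1+\epsilon)F(S)$, which (after absorbing constants into $\epsilon$, or by running the argument with $\epsilon/2$) yields the sparsifier inequality $(1-\epsilon)F'(S)\leq F(S)\leq (1+\epsilon)F'(S)$ as required. I expect the only real subtlety to be the range bound in the second step — specifically, noting that the definition of $p_i$ as a \emph{maximum} over $A\subseteq E$ is exactly what is needed to control $f_i(S)/\kappa_i$ uniformly for every $S$, which is what makes the Chernoff bound come out clean enough to survive the $2^n$-way union bound.
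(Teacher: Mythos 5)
Your proposal is correct and follows essentially the same route as the paper's proof: compute $\E[F'(S)]=F(S)$, apply the Chernoff bound with range $a=F(S)/\kappa$ obtained precisely from the maximum in the definition of $p_i$ in \eqref{eq:p_i}, and union bound over all $2^n$ subsets with $\kappa=3\log(2^{n+1}/\delta)/\epsilon^2$. If anything you are slightly more careful than the paper (the $F(S)=0$ case and the normalization of $\epsilon$ relative to $F$ versus $F'$); the one point to tighten is that for indices with $\kappa_i=1$ the term $\bmw_i f_i(S)$ is deterministic and should simply be conditioned out of the Chernoff bound (it contributes exactly its expectation), rather than being claimed to fit in the range $[0,F(S)/\kappa]$ --- your parenthetical remark only shows it fits in $[0,F(S)]$, which would not survive the union bound.
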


\begin{proof}
    We prove that for every $S\subseteq E$ with high probability it holds that $(1-\epsilon)F'(S) \leq F(S) \leq (1+\epsilon)F'(S)$. 
        
    Observe that by our choice of $p_i$ and $\bmw_i$ we have $\E[F'(S)]=F(S)$, for all subsets $S\subseteq E$. Consider a subset $S_k$. Using \Cref{lem:Chernoff}, we have
    \begin{align}
    \nonumber
        &\bbP\Big[ |F'(S_k)-\E[F'(S_k)]| \geq \epsilon\E[F'(S_k)]\Big] \\
        &= \bbP\Big[ |F'(S_k)-F(S_k)| \geq \epsilon F(S_k)\Big] \\
        &\leq 2\exp{\left(\nicefrac{-\epsilon^2 F(S_k)}{3a}\right)}
        \label{eq:concentration}
    \end{align}
    where $a=\max_{i}\bmw_i f_i(S_k)$. We bound the right hand side of~\eqref{eq:concentration} by providing an upper bound for $a$.
    \begin{align}
    \nonumber
    \label{eq:upper-bound-a}
        a &=\max\limits_{i}\bmw_i f_i(S_k) = \max\limits_{i} \frac{f_i(S_k)}{\kappa\cdot p_i}\\
        &= \max\limits_{i} \frac{f_i(S_k)}{\kappa\cdot\max\limits_{A\subseteq E} \frac{f_i(A)}{F(A)}}\\
        &\leq \max\limits_{i} \frac{f_i(S_k)}{ \kappa\cdot\frac{f_i(S_k)}{F(S_k)}}
        = \frac{F(S_k)}{\kappa}
    \end{align}
    Given the above upper bound for $a$ and the inequality in~\eqref{eq:concentration} yields
    \begin{align*}
         &\bbP\Big[ |F'(S_k)-F(S_k)| \geq \epsilon F(S_k)\Big] \leq 2\exp{\left(-\frac{\epsilon^2 F(S_k)}{3a}\right)}\\ 
         &\leq 2\exp{\left(-\frac{\epsilon^2 F(S_k)}{3 F(S_k)/\kappa}\right)} = 2\exp{\left(\nicefrac{-\kappa\epsilon^2}{3}\right)}.
    \end{align*}
    Recall that $\kappa= 3\log (2^{n+1}/\delta)/\epsilon^2$. 
    Hence, taking a union bound  over all $2^n$ possible subsets yields that Algorithm~\ref{alg:sparsification} with probability at least $1-\delta$ returns a spectral sparsifier for $F$.
\end{proof}

\begin{lemma}\label{lem:UpperBound-p_i}
    \Cref{alg:sparsification} outputs an $\epsilon$-sparsifier with the expected size $O(\frac{B\cdot n^2}{\epsilon^2})$.
\end{lemma}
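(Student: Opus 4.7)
The plan is to compute the expected size directly from the sampling probabilities and then reduce to a purely combinatorial bound on $\sum_{i=1}^N p_i$. Since $\Pr[\bmw_i \neq 0] = \kappa_i = \min\{1,\kappa p_i\} \leq \kappa p_i$, linearity of expectation gives
\[
    \E[\mathrm{size}(\bmw)] = \sum_{i=1}^N \kappa_i \leq \kappa \sum_{i=1}^N p_i,
\]
and with $\kappa = 3\log(2^{n+1}/\delta)/\epsilon^2 = O(n/\epsilon^2)$ for any fixed $\delta \in (0,1)$, the lemma reduces to establishing the combinatorial estimate $\sum_i p_i = O(B \cdot n)$.

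The next step is to translate $p_i = \max_A f_i(A)/F(A)$ into a statement about extreme points of the base polytope $\mathcal{B}(f_i)$. Because $f_i$ is submodular with $f_i(\emptyset)=0$, Edmonds' greedy theorem produces, for every $A \subseteq E$, an extreme point $v \in \mathcal{B}(f_i)$ with $v(A) = f_i(A)$ (take any permutation of $E$ placing the elements of $A$ first). Consequently $f_i(A) = \max_{v \in \mathrm{ext}(\mathcal{B}(f_i))} v(A)$, and so
\[
    p_i = \max_{v \in \mathrm{ext}(\mathcal{B}(f_i))} \max_A \frac{v(A)}{F(A)} \leq \sum_{v \in \mathrm{ext}(\mathcal{B}(f_i))} q_v, \qquad q_v := \max_A \frac{v(A)}{F(A)}.
\]
Indexing the (at most $B$) extreme points of each $\mathcal{B}(f_i)$ by a slot $k \in \{1,\dots,B\}$ and grouping the double sum over $i$ by slot, the task reduces to showing that for any single choice of one extreme point $v_i^{(k)} \in \mathrm{ext}(\mathcal{B}(f_i))$ per index $i$, one has $\sum_i q_{v_i^{(k)}} = O(n)$.

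This slot-wise estimate is the crux and the step I expect to be the main obstacle. It is the direct analogue of the Nash--Williams / Benczur--Karger bound $\sum_{e \in E} 1/\lambda_e \leq n-1$ that powers cut sparsification: there each edge contributes $B = 2$ extreme points and $q_v$ equals the inverse of an $(s,t)$-min-cut. For a general submodular decomposition, the extreme point $v_i^{(k)}$ is associated to a permutation of $E$ produced by the greedy algorithm, and the $A$ attaining $q_{v_i^{(k)}}$ must be a prefix of that permutation; I would amortize the reciprocals $1/F(A)$ across the $n$ levels of these prefix chains by iteratively peeling off ``strong'' prefixes, in the spirit of edge strengths in graph sparsification. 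Summing the resulting $O(n)$ estimate over the $B$ slots then delivers $\sum_i p_i = O(B n)$ and hence the lemma. The delicate part is coordinating the amortizations across different $i$'s whose associated permutations disagree, and it is precisely here that the factor $B = \max_i|\mathcal{B}(f_i)|$ enters the final bound.
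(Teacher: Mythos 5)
Your reduction is set up correctly: the expected size is $\sum_i \kappa_i \leq \kappa \sum_i p_i$ with $\kappa = O(n/\epsilon^2)$, and the identity $f_i(A) = \max_{v \in \mathrm{ext}(\mc{B}(f_i))} \ip{v,\bmone_A}$ is the right way to bring the extreme points into play; this matches the paper's strategy, which isolates exactly the estimate $\sum_i p_i \leq n\cdot B$ as a separate claim. However, your proof stops precisely at the step that carries all the content. The slot-wise bound $\sum_i q_{v_i^{(k)}} = O(n)$ is asserted, and the proposed mechanism --- a Nash--Williams-style amortization by ``iteratively peeling off strong prefixes'' of the greedy permutations --- is left as a sketch whose central difficulty (``coordinating the amortizations across different $i$'s whose associated permutations disagree'') you explicitly acknowledge not having resolved. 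As written, this is a genuine gap: no reader could verify the $O(Bn)$ bound from what you have provided, and it is not clear that the peeling argument can be made to work at this level of generality.

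The irony is that the slot-wise estimate you want does not need any amortization; it follows from an elementary averaging argument, which is essentially what the paper does. Fix a choice of one extreme point $v_j$ per index $j$ (your slot $k$). Since $F(A) = \sum_j \max_{y \in \mc{B}(f_j)}\ip{y,\bmone_A} \geq \sum_j \ip{v_j,\bmone_A}$, you get $q_{v_i} \leq \max_A \ip{v_i,\bmone_A}/\sum_j \ip{v_j,\bmone_A}$. Writing both numerator and denominator as sums over $e \in A$ and applying the mediant inequality $\frac{\sum_e a_e}{\sum_e b_e} \leq \max_e \frac{a_e}{b_e}$ (valid when the entries are nonnegative, e.g.\ in the monotone case), this is at most $\max_{e}\, v_i(e)/\sum_j v_j(e)$, hence at most $\sum_{e}\, v_i(e)/\sum_j v_j(e)$; swapping the order of summation over $i$ and $e$ gives $\sum_i q_{v_i} \leq \sum_e 1 = n$, and summing over the $B$ slots yields $\sum_i p_i \leq Bn$. (The paper organizes this slightly differently --- it bounds the max over extreme points by the sum in the numerator and by the average in the denominator, which is where its factor $\max_j|\mc{B}(f_j)|$ enters --- but the engine is the same coordinate-wise mediant argument, not edge strengths.) You should either carry out this computation or, if you insist on the peeling route, actually construct the hierarchy of strong prefixes and prove the charging scheme terminates with total charge $O(n)$; at present neither is done.
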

\begin{proof}
    In \Cref{alg:sparsification}, each $\bmw_i$ is greater than zero with probability $\kappa_i$ and it is zero with probability $1-\kappa_i$. Hence,
    \begin{align}
    \label{eq:up-expected-size}
        \E[\mathrm{size}(\bmw)] &= \sum_{i\in D} \kappa_i \leq  \kappa\sum_{i\in D} p_i
        \leq O\left(\nicefrac{n}{\epsilon^2}\right)\sum_{i\in D} p_i
    \end{align}
    It suffices to show an upper bound for $\sum_{i\in D} p_i$. 
    \begin{claim}
    \label{claim:up-p_i}
        $\sum_{i\in D} p_i\leq n\cdot \max_{i\in D}|\mc{B}(f_i)|= n\cdot B$.
    \end{claim}
    Claim~\ref{claim:up-p_i} and inequality \eqref{eq:up-expected-size} yield the desired bound.
\end{proof}
\Cref{lem:SS,lem:UpperBound-p_i} proves the existence part of \Cref{thm:main-body-S}. That is, for every $\epsilon,\delta\in(0,1)$, there exists an $\epsilon$-sparsifier of size at most $O(\frac{B\cdot n^2}{\epsilon^2})$ with probability at least $1-\delta$.

\paragraph{Polynomial time algorithm.} Observe that computing $p_i$'s~\eqref{eq:p_i} may not be a polynomial-time task in general. However, to guarantee that \Cref{alg:sparsification} outputs an $\epsilon$-sparsifier with high probability it is sufficient to instantiate it with an upper bound for each $p_i$ (see proof of Lemma~\ref{lem:SS}). Fortunately, the result of~\citet{BaiIWB16} provides an algorithm to approximate the ratio of two monotone submodular functions.
\begin{theorem}[\citet{BaiIWB16}]\label{thm:unconstrained-pi-approximation}
    Let $f$ and $g$ be two monotone submodular functions. Then there exists a polynomial-time algorithm that approximates $\max_{S\subseteq E}\frac{f(S)}{g(S)}$ within $O(\sqrt{n}\log n)$ factor. 
\end{theorem}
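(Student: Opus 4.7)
The plan is to reduce the ratio-approximation problem to the ellipsoidal approximation of monotone submodular functions due to Goemans, Harvey, Iwata, and Mirrokni. Their algorithm produces, in polynomial time, for any monotone submodular $h\colon 2^E\to \bbR_+$, a symmetric positive semidefinite matrix $M_h$ such that $\hat h(S) := \sqrt{\bmone_S^{\top} M_h \bmone_S}$ satisfies $\hat h(S) \leq h(S) \leq \alpha_n \cdot \hat h(S)$ for every $S \subseteq E$, where $\alpha_n = O(\sqrt n \log n)$. The strategy is to exploit this as a black box to turn the submodular ratio into a tractable algebraic surrogate.

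First I would run this approximation procedure on $f$ and $g$ to obtain $\hat f$ and $\hat g$, and then solve the surrogate problem $\max_S \hat f(S)/\hat g(S)$, which, after squaring, is a generalized Rayleigh quotient $\bmone_S^{\top} M_f \bmone_S \,/\, \bmone_S^{\top} M_g \bmone_S$ over the Boolean cube. I would attack this surrogate by Dinkelbach's parametric method: binary-search on $\lambda$ and, at each step, maximize the indefinite quadratic $\bmone_S^{\top}(M_f - \lambda M_g)\bmone_S$ over $\{0,1\}^E$ via its $[0,1]^E$-relaxation and standard threshold rounding. Any approximate maximizer $\hat S$ of the surrogate then yields a set whose true ratio $f(\hat S)/g(\hat S)$ is within the promised factor of the optimum.

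The main obstacle is keeping the overall distortion at $O(\sqrt n \log n)$ rather than the $O(n \log^2 n)$ one gets by naively composing the two ellipsoidal approximations (which costs $\alpha_n$ on the numerator and another $\alpha_n$ on the denominator). The key trick is to apply the ellipsoidal approximation to only one side of the ratio — say to $g$ alone — and work with $\max_S f(S)/\hat g(S)$, equivalently $\max_S f(S)^2 / (\bmone_S^{\top} M_g \bmone_S)$. Dinkelbach iteration on this problem produces subproblems of the form $\max_S f(S)^2 - \lambda\,\bmone_S^{\top} M_g \bmone_S$, which can be attacked by combining the Lov\'{a}sz extension of $f$ with continuous optimization of the quadratic penalty and rounding the fractional solution back to a subset. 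Paying only a single ellipsoidal factor on $g$, and absorbing the constant-factor loss of the inner submodular relaxation, delivers the desired $O(\sqrt n \log n)$-approximation in polynomial time.
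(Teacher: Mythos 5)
Your first ingredient is the right one and matches what \citet{BaiIWB16} actually do: apply the Goemans--Harvey--Iwata--Mirrokni approximation to one side of the ratio only (the denominator $g$), so that the full $O(\sqrt{n}\log n)$ loss is incurred exactly once; note that the usable form of that result is $\hat g(S)=\sqrt{w(S)}$ for a \emph{modular} $w$, i.e., a diagonal $M_g$, not a general PSD quadratic form. The genuine gap is in how you then solve the surrogate problem. Your plan --- Dinkelbach iteration producing subproblems of the form $\max_S f(S)^2-\lambda\,\bmone_S^{\top}M_g\bmone_S$, attacked via the Lov\'{a}sz extension and rounding --- has no supporting guarantee: $f(S)^2$ is not submodular even when $f$ is; the Lov\'{a}sz extension certifies convexity only for \emph{minimization} and is of no help in maximizing $f$ or $f^2$; maximizing an indefinite quadratic over $\{0,1\}^E$ (or its $[0,1]^E$ relaxation) is itself NP-hard and inapproximable in general; and Dinkelbach's method requires exact, or very carefully controlled approximate, subproblem oracles. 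So the second half of the argument does not go through as written.

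The route taken by \citet{BaiIWB16} (and sketched in this paper's proof of Theorem~\ref{thm:matroid-pi-approximation}) avoids all of this. Once the denominator is replaced by $\sqrt{w(S)}$ with $w$ modular, one binary-searches over a threshold $c$ on the numerator value and, for each $c$, solves ``minimize the modular function $w(X)$ subject to $f(X)\ge c$'' --- a submodular set cover instance that admits a constant-factor $(\sigma,\rho)$-bicriterion algorithm (e.g., greedy). A constant bicriterion guarantee on these covering subproblems translates into a constant-factor guarantee on $\max_S f(S)/\hat g(S)$, and composing with the single $O(\sqrt{n}\log n)$ ellipsoidal loss yields the theorem. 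If you replace your Dinkelbach/quadratic step with this threshold-plus-set-cover reduction, your outline becomes essentially the intended proof.
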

Hence, when all $f_i$'s are monotone we can compute $\hat{p}_i$'s with $p_i\leq \hat{p}_i\leq O(\sqrt{n}\log n)p_i$ in polynomial time which leads to a polynomial-time randomized algorithm that constructs an $\epsilon$-sparsifier of the expected size at most $O(\frac{B\cdot n^{2.5} \log n}{\epsilon^2})$. This proves the second part of \Cref{thm:main-body-S}.

As we will see, in various applications, the expected size of the sparsifier is often much better than the ones presented in this section. Also, we emphasize that once a sparsifier is constructed it can be reused many times (possibly for maximization/minimization under several different constraints). Hence computing or approximating $p_i$'s should be regarded as a preprocessing step, see Example \ref{ex:knapsack} for a motivating example. Finally, it is straightforward to adapt our algorithm to sparsify decomposable submodular functions of the form $\sum_{i\in D} \alpha_i f_i$, known as \emph{mixtures of submodular functions} \cite{BairiIRB15,TschiatschekIWB14}. 

\begin{example}[Knapsack constraint]
\label{ex:knapsack}
Consider the following optimization problem
     \begin{align}
         \max_{S\subseteq I}\{F(S): \sum_{i\in S}c_i \leq B\}
     \end{align}
     where $I=\{1,\dots,n\}$, $B$ and $c_i$, $i\in I$, are nonnegative integers. In scenarios where the items costs $c_i$ or $B$ are dynamically changing and $F=\sum_{i=1}^{N} f_i$ is decomposable, it is quite advantageous to use our sparsification algorithm and reuse a sparsifier. That is, instead of maximizing $F$ whenever item costs or $B$ are changed, we can maximize $F'$, a sparsification of $F$.
\end{example}


\section{Constructing a sparsifier under constraints}

Here we are interested in constructing a sparsifier for a decomposable submodular function $F$ while the goal is to optimize $F$ subject to constraints. One of the most commonly used and general constraints are matroid constraints. That is, for a matroid $\mathcal{M}=(E,\mathcal{I})$, the objective is finding $S^*=\argmax_{S\subseteq E,S\in \mathcal{I}} F(S)$.
\begin{algorithm}[t]
  \caption{}\label{alg:k-sparsification}
  \begin{algorithmic}[1]
    \Require{Submodular functions $f_i:\{0,1\}^E\to \bbR$ in dataset $D$, matroid $\mathcal{M}=(E,\mathcal{I})$, and $\epsilon,\delta \in (0,1)$}
    \State{$\bmw\gets \mathbf{0}$}
    \State{$\kappa\gets 3\log (2n^{r+1}/\delta)/\epsilon^2$, where $r$ is the rank of $\mathcal{M}$.}
    \For{$f_i$ in $D$}
        \State{$p_i\gets \max_{A\in\mathcal{I}} f_i(A)/F(A)$}
        \State{$\kappa_i\gets \min\{1, \kappa\cdot p_i\}$}
        \State{$\bmw_i \gets 1/\kappa_i$ with probability $\kappa_i$}\Comment{do nothing with probability $1-\kappa_i$}
    \EndFor
    \State{\Return $\bmw\in \bbR^D$.}
  \end{algorithmic}
\end{algorithm}

In this setting it is sufficient to construct a sparsifier that approximates $F$ only on independent sets. It turns out that we can construct a \emph{smaller} sparsifier than the one constructed to approximate $F$ everywhere. For each submodular function $f_i$, let
    \begin{align}
    \label{eq:p_i-matroid}
        p_i = \max\limits_{A\in\mathcal{I}}\frac{f_i(A)}{F(A)}. 
    \end{align}
Other than different definition for $p_i$'s and different $\kappa$, \Cref{alg:k-sparsification} is the same as \Cref{alg:sparsification}. 

\begin{theorem}\label{thm:k-sparsifier}
    \Cref{alg:k-sparsification} returns a vector $\bmw$ with expected size at most $O(\frac{B\cdot r\cdot n}{\epsilon^2})$ such that, with probability at least $1-\delta$, for $F'=\sum_{i\in D} \bmw_i f_i$ we have
    \[
        (1- \epsilon)F'(S)\leq F(S)\leq (1+ \epsilon)F'(S) \quad \forall S\subseteq \mc{M}.
    \]
    
\end{theorem}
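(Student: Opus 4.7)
The plan is to follow the same two-step structure used in \Cref{lem:SS,lem:UpperBound-p_i}: first establish correctness pointwise on each independent set via a Chernoff bound and take a union bound over $\mc{I}$, then bound the expected sparsifier size by adapting \Cref{claim:up-p_i}.

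For correctness, I would fix an independent set $S \in \mc{I}$ and observe that by linearity of expectation $\E[F'(S)] = F(S)$. Because the $p_i$'s in~\eqref{eq:p_i-matroid} are taken as maxima over $\mc{I}$ and $S \in \mc{I}$, the defining inequality $p_i \ge f_i(S)/F(S)$ still holds, so the exact same derivation as in~\eqref{eq:upper-bound-a} gives $\bmw_i f_i(S) \le F(S)/\kappa$ uniformly in $i$. Applying \Cref{lem:Chernoff} then yields
\[
    \bbP\bigl[\, |F'(S) - F(S)| \ge \epsilon F(S) \,\bigr] \;\le\; 2\exp\bigl(-\kappa\epsilon^2/3\bigr).
\]
The essential new ingredient is the size of the union bound: a matroid $\mc{M}$ of rank $r$ on $n$ elements has at most $\sum_{k=0}^{r}\binom{n}{k} \le n^{r+1}$ independent sets. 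With $\kappa = 3\log(2n^{r+1}/\delta)/\epsilon^2$, a union bound over $\mc{I}$ drives the overall failure probability below $\delta$.

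For the size bound, I would reuse \Cref{claim:up-p_i}: since the matroid-restricted $p_i$ in~\eqref{eq:p_i-matroid} is no larger than the unconstrained one from~\eqref{eq:p_i}, we still have $\sum_{i\in D} p_i \le n \cdot B$. Combined with $\kappa = O\bigl((r\log n + \log(1/\delta))/\epsilon^2\bigr)$, this gives
\[
    \E[\size(\bmw)] \;\le\; \kappa \sum_{i\in D} p_i \;=\; O\!\left(\frac{B \cdot r \cdot n}{\epsilon^2}\right),
\]
(treating $\log n$ and $\log(1/\delta)$ as constants absorbed into the $O(\cdot)$, as in the statement).

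The main obstacle I anticipate is correctly delimiting the union bound to $\mc{I}$ rather than $2^E$: the Chernoff argument in~\eqref{eq:upper-bound-a} crucially relies on $S$ lying inside the domain over which $p_i$ is optimized, since only there does the identity $p_i \ge f_i(S)/F(S)$ furnish the needed bound on $a$. Attempting to push the guarantee to arbitrary $S \subseteq E$ with the matroid-restricted $p_i$ would break the bound on $a$ and therefore the concentration step. Once this restriction is respected, the remainder of the argument is a direct adaptation of the proofs of \Cref{lem:SS,lem:UpperBound-p_i}.
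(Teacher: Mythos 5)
Your proposal is correct and follows essentially the same route as the paper's own proof: a Chernoff bound on each $S\in\mc{I}$ using $p_i \ge f_i(S)/F(S)$ to control $a$, a union bound over the at most $n^{r+1}$ independent sets with $\kappa = 3\log(2n^{r+1}/\delta)/\epsilon^2$, and the size bound via \Cref{claim:up-p_i} since the matroid-restricted $p_i$'s are dominated by the unconstrained ones. Your explicit remark about absorbing the $\log n$ factor from $\kappa$ into the $O(\cdot)$ is a fair observation, and the paper's own derivation implicitly does the same.
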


Theorem~\ref{thm:k-sparsifier} proves the existence part of Theorem~\ref{thm:main-body-K-S}. \Cref{alg:k-sparsification} can be turned into a polynomial-time algorithm if one can approximate $p_i$'s~\eqref{eq:p_i-matroid}.
By modifying the proof of Theorem~\ref{thm:unconstrained-pi-approximation} we prove the following.
\begin{theorem}\label{thm:matroid-pi-approximation}
    Let $f$ and $g$ be two monotone submodular functions and $\mc{M}=(E,\mc{I})$ be a matroid. Then there exists a polynomial-time algorithm that approximates $\max_{S\subseteq E,S\in\mc{I}}\frac{f(S)}{g(S)}$ within $O(\sqrt{n}\log n)$ factor. 
\end{theorem}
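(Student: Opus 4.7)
The plan is to mirror the proof of Theorem~\ref{thm:unconstrained-pi-approximation} due to \citet{BaiIWB16}, replacing every optimization subroutine that ranges over all of $2^E$ by its matroid-constrained counterpart. The $O(\sqrt{n}\log n)$ factor in that theorem comes from approximating a monotone submodular function by a surrogate of the form $\sqrt{w(S)}$ via the ellipsoidal approximation of Goemans, Harvey, Iwata and Mirrokni; this approximation is a pointwise bound and therefore remains valid when the feasible set is restricted from $2^E$ to $\mc{I}$.

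Concretely, I would first apply the ellipsoidal approximation to $g$ to obtain a nonnegative modular $w\colon E\to\bbR_+$ with
\[
\sqrt{w(S)}\;\leq\;g(S)\;\leq\;O(\sqrt{n}\log n)\cdot\sqrt{w(S)} \qquad \forall S\subseteq E,
\]
and argue, exactly as in the unconstrained case, that any $\beta$-approximate maximizer of the surrogate $\max_{S\in \mc{I}} f(S)/\sqrt{w(S)}$ yields an $O(\beta\sqrt{n}\log n)$-approximation for $\max_{S\in \mc{I}} f(S)/g(S)$. Because the Goemans et al.\ approximation is purely pointwise, this reduction is insensitive to the feasible region and inherits verbatim from the unconstrained analysis.

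The remaining task is to produce a constant-factor approximation to the surrogate in polynomial time while respecting the matroid constraint. I would guess the value $W = w(S^\star)$ within a factor of two by enumerating $O(\log n)$ geometrically spaced thresholds, and for each guess solve
\[
\max\bigl\{\,f(S)\,:\,S\in \mc{I},\ w(S)\leq W\,\bigr\}.
\]
This is monotone submodular maximization under a matroid constraint intersected with a single knapsack constraint, which admits a constant-factor approximation via continuous greedy combined with standard Lagrangian/enumeration handling of the knapsack constraint. Returning the candidate set with the best value of $f/g$ over all guesses gives the claimed $O(\sqrt{n}\log n)$-approximation.

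The main obstacle is precisely this substitution: in the unconstrained setting, the inner problem $\max\{f(S):w(S)\leq W\}$ is handled by vanilla greedy, whereas under a matroid constraint one must invoke continuous greedy (or equivalent pipage/swap rounding) and fold in the knapsack constraint without inflating the approximation factor beyond $O(1)$. Once this known machinery is plugged in, the reduction chain from \citet{BaiIWB16} carries through unchanged and yields the $O(\sqrt{n}\log n)$ bound on all of $\mc{I}$.
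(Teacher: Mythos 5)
Your proposal is correct and matches the paper's proof in all essential respects: the paper likewise treats the $O(\sqrt{n}\log n)$ loss as inherited verbatim from the (pointwise, hence constraint-insensitive) reduction in \citet{BaiIWB16}, and handles the matroid by recasting the inner subproblem as monotone submodular maximization subject to a matroid constraint intersected with a single (modular) knapsack constraint, solved to a constant factor and swept over geometrically spaced budgets $d=2^i$. The only cosmetic difference is that the paper packages this as a $(1,\alpha)$-bicriterion algorithm for ``minimize the modular cost subject to $g(X)\geq c$ and $X\in\mc{I}$'' via its dual maximization, whereas you unfold the ellipsoidal-approximation step explicitly; the underlying algorithmic content is the same.
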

By this theorem, when all $f_i$s are monotone we can compute $\hat{p}_i$'s with $p_i\leq \hat{p}_i\leq O(\sqrt{n}\log n)p_i$ in polynomial time which leads to a polynomial-time randomized algorithm that constructs an $\epsilon$-sparsifier of the expected size at most $O(\frac{B\cdot r\cdot n^{1.5} \log n}{\epsilon^2})$. This proves the second part of \Cref{thm:main-body-K-S}.


\section{Applications}
\subsection{Submodular function maximization with cardinality constraint}

\begin{algorithm}[t]
  \caption{}\label{alg:k-maximization}
  \begin{algorithmic}[1]
    \Require{Submodular function $F=\sum\limits_{i\in D} f_i$ with each $f_i:\{0,1\}^E\to \bbR$, constant $k$, and $\epsilon,\delta\in(0,1)$}
    \State{Compute $F'=\sum_{i\in D} \bmw_i f_i$, an $\epsilon$-sparsifier for $F$.}
    \State{$A\gets\emptyset$.}
    \While{$|A|\leq k$}
        \State{$a_i\gets\argmax_{a\in E\setminus A} (F'(A\cup\{a\})-F'(A))$.}
        \State{$A\gets A\cup \{a_i\}$.}
    \EndWhile
    \State{\Return $A$.}
  \end{algorithmic}
\end{algorithm}

Our sparsification algorithm can be used as a preprocessing step and once a sparsifier is constructed it can be reused many times (possibly for maximization/minimization under several different constraints). To elaborate on this, we consider the problem of maximizing a submodular function subject to a cardinality constraint. That is finding $S^*=\argmax_{S\subseteq E,|S|\leq k}F(S)$. Cardinality constraint is a special case of matroid constraint where the independent sets are all subsets of size at most $k$ and the rank of the matroid is $k$. A celebrated result of \citet{NemhauserWF78} states that for non-negative monotone submodular functions a simple greedy algorithm provides a solution with $(1 - 1/\mathrm{e})$ approximation guarantee to the optimal (intractable) solution. For a ground set $E$ of size $n$ and a monotone submodular function $F=\sum_{i\in D}f_i$, this greedy algorithm needs $O(k n N)$ function evaluations to find $S$ of size $k$ such that $F(S)\geq (1 - 1/\mathrm{e})F(S^*)$. We refer to this algorithm as \texttt{GreedyAlg}. In many applications where $N\gg n$, having a sparsifier is beneficial. Applying \texttt{GreedyAlg} on an $\epsilon$-sparsifier of size $O(Bk n/\epsilon^2)$ improves the number of function evaluations to $O(Bk^2 n^2/\epsilon^2)$ and yields $S$ of size $k$ such that $F(S)\geq (1 - 1/\mathrm{e}-\epsilon)F(S^*)$ with high probability (see Algorithm~\ref{alg:k-maximization}). 

We point out that sampling techniques such as \cite{Mitrovic0ZK18,MirzasoleimanBK15} sample elements from the ground set $E$ rather than sampling from functions $f_1,\dots,f_N$. Hence their running time depend on $N$, which could be slow when $N$ is large --- the regime we care about. Besides, our algorithm can be used as a preprocessing step for these algorithms. For instance, the lazier than lazy greedy algorithm \cite{MirzasoleimanBK15} requires $O(nN\log{\frac{1}{\epsilon}})$ function evaluations. However, when $N$ is much larger than $n$ it is absolutely beneficial to use our sparsification algorithm and reduce the number of submodular functions that one should consider.

\subsection{Two well-known examples}
\paragraph{Maximum Coverage Problem.} Let $[N]$ be a universe and $E=\{S_1,\dots,S_n\}$ with each $S_i\subseteq N$ be a family of sets. Given a positive integer $k$, in the \textsc{Max Coverage} problem the objective is to select at most $k$ of sets from $E$ such that the maximum number of elements are covered, i.e., the union of the selected sets has maximal size. One can formulate this problem as follows. For every $i\in [N]$ and $A\subseteq [n]$ define $f_i(A)$ as

\begin{align*}
    f_i(A)=
    \begin{cases}
      1 & \text{if there exists $a\in A$ such that $i\in S_a$} ,\\
      0 & \text{otherwise.}
   \end{cases}
\end{align*}
Note that $f_i$'s are monotone and submodular. Furthermore, define $F:2^n\to \bbR_+$ to be $F(A)=\sum_{i\in [N]}f_i(A)$ which is monotone and submodular as well. Now the \textsc{Max Coverage} problem is equivalent to $\max_{A\subseteq [n], |A|\leq k} F(A)$. For each submodular function $f_i$, the corresponding $p_i$ is
\begin{align*}
    p_i&=\max\limits_{\substack{A\subseteq [n], |A|\leq k}}\frac{f_i(A)}{F(A)} = \max\limits_{\substack{S_a\in E, i\in S_a}}\frac{f_i(\{a\})}{F(\{a\})}\\
    &= \max\limits_{\substack{S_a\in E, i\in S_a}}\frac{1}{F(\{a\})}=\max\limits_{\substack{S_a\in E, i\in S_a}}\frac{1}{|S_a|}.
\end{align*}
We can compute all the $p_i$'s in $O(\sum |S_i|)$ time, which is the input size. Then we can construct a sparsifier in $O(N)$ time. In total, the time required for sparsification is $O(\sum|S_i|+N)$. On the other hand, for this case we have
\begin{align*}
    \sum\limits_{i=1}^N p_i =\sum\limits_{i=1}^N \max\limits_{\substack{S_a\in \mathcal{S}, i\in S_a}}\frac{1}{|S_a|} \leq \sum\limits_{i=1}^n \frac{|S_i|}{|S_i|} = n.
\end{align*}
By Lemma~\ref{lem:UpperBound-p_i}, this upper bound provides that our algorithm constructs an $\epsilon$-sparsifier of size at most $O(k n/\epsilon^2)$. \Cref{alg:k-maximization} improves the running time of the \texttt{GreedyAlg} from $O(k n N)$ to $O(k^2 n^2 /\epsilon^2)$. Furthermore, \Cref{alg:k-maximization} returns a set $A$ of size at most $k$ such that $(1-1/\mathrm{e}-\epsilon)\texttt{OPT}\leq F(A)$. (\texttt{OPT} denotes $F(S^*)$ where $S^*=\argmax_{S\subseteq E,|S|\leq k} F(S)$.)
\paragraph{Facility Location Problem.} Let $I$ be a set of $N$ clients and $E$ be a set of facilities with $|E|=n$. Let $c:I\times E\to \bbR$ be the cost of assigning a given client to a given facility. For each client $i$ and each subset of facilities $A\subseteq E$, define $f_i(A)=\max_{j\in A} c(i,j)$. For any non-empty subset $A\subseteq E$, the value of $A$ is given by
\[
    F(A) = \sum\limits_{i\in I}f_i(A)= \sum\limits_{i\in I}\max\limits_{j\in A} c(i,j).
\]
For completeness, we define $F(\emptyset) = 0$. An instance of the
\textsc{Max Facility Location} problem is specified by a tuple $(I, E, c)$. The objective is to choose a subset $A\subseteq E$ of size at most $k$ maximizing $F(A)$. For each submodular function $f_i$, the corresponding $p_i$ is
\[
    p_i=\max\limits_{\substack{A\subseteq E, |A|\leq k}}\frac{f_i(A)}{F(A)} = \max\limits_{\substack{A\subseteq E, |A|\leq k}}\frac{\max\limits_{j\in A}c(i,j)}{F(A)}=\max\limits_{j\in E}\frac{c(i,j)}{ F(\{j\})}
\]
It is clear that $p_i$'s can be computed in $O(|I|\cdot |E|)$ time, which is the input size. In this case, we have
\begin{align*}
    \sum\limits_{i \in I} p_i &=\sum\limits_{i \in I} \max\limits_{j\in E}\frac{c(i,j)}{ F(\{j\})} \leq \sum\limits_{j=1}^{|E|} \frac{\sum\limits_{i \in I} c(i,j)}{ F(\{j\})} = \sum\limits_{j=1}^{|E|} \frac{F(\{j\})}{F(\{j\})}\\
    &= |E| = n.
\end{align*}
Hence, by Lemma~\ref{lem:UpperBound-p_i}, our algorithm construct a sparsifier of size $O(k n/\epsilon^2)$. \Cref{alg:k-maximization} improves the running time of the \texttt{GreedyAlg} from $O(k n N)$ to in $O(k^2 n^2 /\epsilon^2)$. Furthermore, \Cref{alg:k-maximization} returns a set $A$ of size at most $k$ such that $(1-1/\mathrm{e}-\epsilon)\texttt{OPT}\leq F(A)$.

\begin{remark}
\citet{LindgrenWD16} sparsify an instance of the \textsc{Facility Location} problem by zeroing out entries in the cost matrix --- this is not applicable to the general setting. The runtime of the \texttt{GreedyAlg} applied on their sparsified instance is $O(nN/\epsilon)$. This runtime is huge when $N$ is large --- the regime we care about. Moreover, we can first construct our sparsifier and apply the algorithm of \citeauthor{LindgrenWD16} on it. 
\end{remark}

\subsection{Submodular function minimization}

Besides the applications regarding submodular maximization, our sparsification algorithm can be used as a preprocessing step for submodular minimization as well. In many applications of the submodular minimization problem
such as image segmentation \citep{ShanuAS16-cvpr}, Markov random field inference \citep{FixJPZ13,KohliLT09,VicenteKR09}, hypergraph cuts \citep{DVeldtBK20-kdd}, covering functions \citep{StobbeK10}, the submodular function at hand is a decomposable submodular function.  Many of recent advances on decomposable submodular minimization such as \cite{EneNV17,AxiotisK0SV21} have leveraged a mix of ideas coming from both discrete and continuous optimization. Here we discuss that our sparsifying algorithm approximates the so called \emph{Lov\'{a}sz extension}, a natural extension of a submodular function to the continuous domain $[0,1]^n$. 

\paragraph{Lov\'{a}sz extension.} Let $\bmx\in [0,1]^n$ be the vector $(\bmx_1 ,\bmx_2, \dots,\bmx_n)$. Let $\pi:[n]\to[n]$ be a sorting permutation of
$\bmx_1 ,\bmx_2, \dots,\bmx_n$, which means if $\pi(i) = j$, then $\bmx_j$ is the $i$-th largest element in the vector $\bmx$. Hence, $1\geq \bmx_{\pi(1)}\geq \cdots\geq \bmx_{\pi(n)}\geq 0$. Let $\bmx_{\pi(0)}=1$ and $\bmx_{\pi(n+1)}=0$. Define sets $S_0^\pi=\emptyset$ and $S_i^\pi=\{\pi(1),\dots,\pi(i)\}$. The \emph{Lov\'{a}sz extension} of $f$ is defined as follows
    $
        f^L(\bmx) = \sum_{i=0}^{n} (\bmx_{\pi(i)}-\bmx_{\pi(i+1)})f(S_i^\pi).
    $
    It is well-known that $f^L(\bmx)=\max_{\bmy\in \mc{B}(f)}\langle \bmy, \bmx\rangle$.

 For a decomposable submodular function $F=\sum_{i\in D} f_i$, its Lov\'{a}sz extension is 
    \[
        F^L(\bmx) = \sum\limits_{j=0}^{n} \sum\limits_{i\in D} (\bmx_{\pi(j)}-\bmx_{\pi(j+1)})f_i(S_j^\pi).
    \]
 Recall the definition of $p_i$'s \eqref{eq:p_i}, they can be expressed in an equivalent way in terms of permutations as follow  
\begin{align}
    \label{eq:p_i-Lovasz}
        p_i = \max\limits_{A\subseteq E}\frac{f_i(A)}{F(A)} =\max\limits_{\pi}\max\limits_{j\in[n]} \frac{f_i(S_j^\pi)}{F(S_j^\pi)} .
    \end{align}
Furthermore, note that $F^L(\bmx)$ is a linear combination of $F(S)$, $S\subseteq E$. Given these, we prove Algorithm~\ref{alg:sparsification} outputs a sparsifier that not only approximates the function itself but also approximates its Lov\'{a}sz extension. 
\begin{theorem}
\label{thm:spectral-sparsification}
    \Cref{alg:sparsification} returns a vector $\bmw$ with expected size at most $O(\frac{B\cdot n^2}{\epsilon^2})$ such that, with probability at least $1-\delta$, for $F'=\sum_{i\in D} \bmw_i f_i$ it holds that
    \[
         (1-\epsilon)F'^L(\bmx) \leq F^L(\bmx) \leq (1+\epsilon)F'^L(\bmx) \quad \forall \bmx\in [0,1]^n.
    \]
\end{theorem}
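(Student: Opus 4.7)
The plan is to leverage \Cref{lem:SS} directly: observe that the Lov\'{a}sz extension of any set function is a \emph{nonnegative linear combination} of its values on a particular chain of subsets, so subset-wise multiplicative approximation immediately transfers to multiplicative approximation of the Lov\'{a}sz extension, without any additional probabilistic cost. The expected size bound $O(B \cdot n^2/\epsilon^2)$ is already established by \Cref{lem:UpperBound-p_i}, so no new work is needed there.

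First, I would condition on the high-probability event from \Cref{lem:SS}: with probability at least $1-\delta$, for every $S \subseteq E$ simultaneously, $(1-\epsilon) F'(S) \leq F(S) \leq (1+\epsilon) F'(S)$. Next, fix an arbitrary $\bmx \in [0,1]^n$, let $\pi$ be its sorting permutation, and let $S_0^\pi \subset S_1^\pi \subset \cdots \subset S_n^\pi$ be the induced chain. By the sorting property, the coefficients $\bmx_{\pi(j)} - \bmx_{\pi(j+1)}$ are all nonnegative for $j = 0, 1, \ldots, n$.

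Then I would combine these two facts termwise. Since the chain $\{S_j^\pi\}$ consists of subsets of $E$, the conditioned event gives $(1-\epsilon) F'(S_j^\pi) \leq F(S_j^\pi) \leq (1+\epsilon) F'(S_j^\pi)$ for every $j$. Multiplying each such inequality by the nonnegative coefficient $\bmx_{\pi(j)} - \bmx_{\pi(j+1)}$ and summing over $j$ yields
\[
(1-\epsilon) F'^L(\bmx) \;=\; \sum_{j=0}^{n} (\bmx_{\pi(j)} - \bmx_{\pi(j+1)})(1-\epsilon) F'(S_j^\pi) \;\leq\; F^L(\bmx) \;\leq\; (1+\epsilon) F'^L(\bmx),
\]
which is exactly the claimed approximation. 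Since $\bmx$ was arbitrary, this holds for all $\bmx \in [0,1]^n$ simultaneously on the conditioned event.

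The only conceptual subtlety---and what might be mistaken for the main obstacle---is that the chain $S_0^\pi, \ldots, S_n^\pi$ depends on $\bmx$, so one could worry about having to union-bound over all $n!$ permutations (or, worse, over the continuum of points in $[0,1]^n$). This worry is resolved by the fact that \Cref{lem:SS} already provides a \emph{uniform} guarantee over all $2^n$ subsets of $E$; in particular, it holds on every chain arising from any $\bmx$. Hence no additional union bound or refinement of the sample complexity is required, and the same $\kappa = 3\log(2^{n+1}/\delta)/\epsilon^2$ used in \Cref{alg:sparsification} suffices.
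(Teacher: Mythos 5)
Your proposal is correct and follows essentially the same route as the paper's proof: both express $F^L(\bmx)$ as the nonnegative linear combination $\sum_{j=0}^{n}(\bmx_{\pi(j)}-\bmx_{\pi(j+1)})F(S_j^\pi)$ and transfer the subset-wise guarantee of \Cref{lem:SS} termwise. Your explicit remarks on the nonnegativity of the coefficients and on why no additional union bound over permutations is needed are points the paper leaves implicit, but they do not change the argument.
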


\begin{remark}[Relation to spectral sparsification of graphs]
\label{remark:cut-function}
  The cut function of a graph $G=(V,E)$ can be seen as a decomposable submodular function $F(S) = \sum_{e \in E}f_e$, where $f_e(S) = 1$ if and only if $e \cap S \neq \emptyset$  and $e \cap (V \setminus S) \neq \emptyset$.
  The goal of spectral sparsification of graphs~\cite{SpielmanT11} is to preserve the quadratic form of the Laplacian of $G$, which can be rephrased as $\sum_{e \in E}{f_e^L(\bmx)}^2$.
  In contrast, our sparsification preserves $F^L(\bmx) = \sum_{e \in E}{f_e^L(\bmx)}$.
  Although we can construct a sparsifier that preserves $\sum_{e \in E}{f_e^L(\bmx)}^2$ in the general submodular setting, we adopted the one used here because, in many applications where submodular functions are involved, we are more interested in the value of $\sum_{e \in E}{f_e^L(\bmx)}$ than $\sum_{e \in E}{f_e^L(\bmx)}^2$, and the algorithm for preserving the former is simpler than that for preserving the latter.
\end{remark}


Because our algorithm gives an approximation on the Lov\'{a}sz extension, it can be used as a preprocessing step for algorithms working on Lov\'{a}sz extensions such as the ones in \cite{AxiotisK0SV21,EneNV17}. For instance, it improves the running time of \citet{AxiotisK0SV21} from $\widetilde{O}(T_{\rm maxflow}(n,n+N)\log{\frac{1}{\epsilon}})$ to $\widetilde{O}(T_{\rm maxflow}(n,n+\frac{n^2}{\epsilon^2})\log{\frac{1}{\epsilon}})$ in cases where each submodular function $f_i\in D$ acts on $O(1)$ elements of the ground set which implies $B=\max_{i}|\mc{B}(f_i)|$ is $O(1)$. An example of such cases is hypergraph cut functions with $O(1)$ sized hyperedges.

Next we discuss several examples for which computing $p_i$'s is a computationally efficient task, thus achieving a polynomial-time algorithm to construct sparsifiers. Recall that the cut function of a graph $G=(V,E)$ can be seen as a decomposable submodular function. In this case, computing each $p_e$ for an edge $e=st\in E$ is equivalent to finding the minimum $s$-$t$ cut in the graph, which is a polynomial time task. A more general framework is the \emph{submodular hypergraph minimum $s$-$t$ cut} problem discussed in what follows.

\paragraph{Submodular hypergraphs~\cite{LiM18,Yoshida19}.} Let $\mc{H}$ be a hypergraph with vertex set $V$ and set of hyperedges $E$ where each hyperedge is a subset of vertices $V$. A submodular function $f_e$ is associated to each hyperedge $e\in E$. In the submodular hypergraph minimum $s$-$t$ cut problem the objective is  
\begin{align}
    \mathrm{minimize}_{S\subset V}\sum_{e\in E}f_e(e\cap S)
\end{align}
subject to $s\in S, t\in V\setminus S$. This problem has been studied by \citet{VeldtBenssonKleinberg} and its special cases where submodular functions $f_e$ take particular forms have been studied with applications in semi-supervised learning, clustering, and rank learning (see \citet{LiM18,VeldtBenssonKleinberg} for more details). Examples of such special cases include:
\begin{itemize}
    \item Linear penalty: $f_e(S)=\min\{|S|,|e\setminus S|\}$
    \item Quadratic Penalty: $f_e(S)= |S|\cdot|e\setminus S|$
\end{itemize}
We refer to Table 1 in \citet{VeldtBenssonKleinberg} for more examples. These examples are cardinality-based, that is, the value of the submodular function depends on the cardinality of the input set (see Definition 3.2 of \citet{VeldtBenssonKleinberg}). It is known that if all the submodular functions are cardinality-based, then computing the $s$-$t$ minimum cut in the submodular hypergraph can be reduced to that in an auxiliary (ordinary) graph (Theorem~4.6 of \citet{VeldtBenssonKleinberg}), which allows us to compute $p_e$'s in polynomial time. 

\begin{remark}
Our sparsification algorithm can also be used to construct  \emph{submodular Laplacian} based on the Lov\'{a}sz extension of submodular functions. Submodular Laplacian was introduced by \citet{Yoshida19} and has numerous applications in machine learning, including in learning ranking data, clustering based on network motifs~\cite{LiM17}, network analysis \cite{wsdm/Yoshida16}, and etc.
\end{remark}

\section{Experimental results}

 \begin{figure}
    \subfigure[Uber pickup]{\includegraphics[width=.5\hsize]{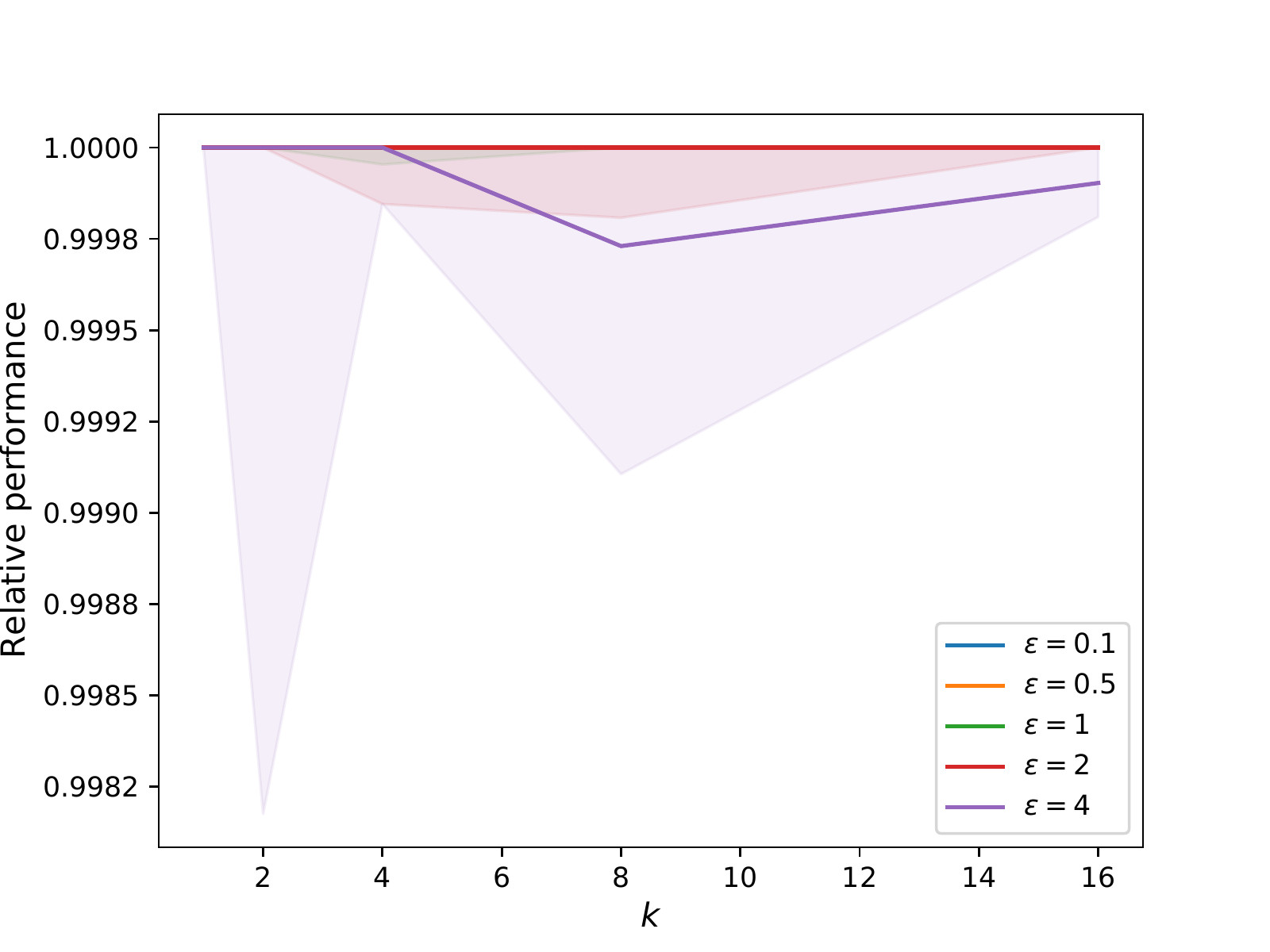}}
    \subfigure[Discogs]{\includegraphics[width=.5\hsize]{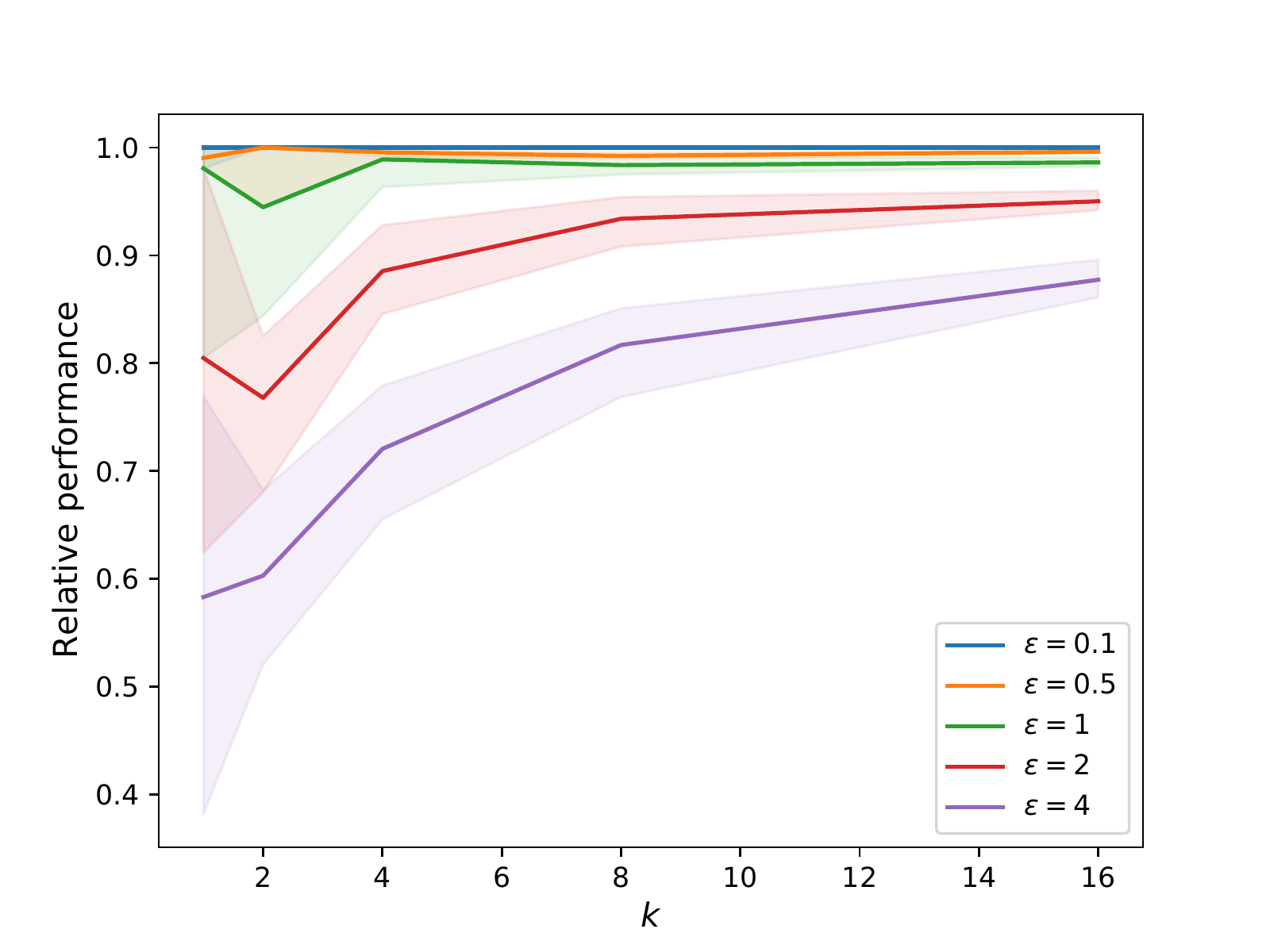}}
    \caption{Relative performance of the greedy method on  sparsifiers.}\label{fig:performance}
   \end{figure}


In this section, we empirically demonstrate that our algorithm (Algorithm~\ref{alg:k-sparsification}) generates a sparse representation of a decomposable submodular function $F:2^E \to \mathbb{R}_+$ with which we can efficiently obtain a high-quality solution for maximizing $F$.
We consider the following two settings.

\textbf{Uber pickup.} We used a database of Uber pickups in New York city in May 2014 consisting of a set $R$ of 564,517 records\footnote{Available at~\url{https://www.kaggle.com/fivethirtyeight/uber-pickups-in-new-york-city}}. Each record has a pickup position, longitude and latitude. Consider selecting $k$ locations as waiting spots for idle Uber drivers.
To formalize this problem, we selected a set $L$ of 36 popular pickup locations in the database, and constructed a facility location function $F:2^L \to \mathbb{R}_+$ as $F(S) = \sum_{v \in R}f_v(S)$, where $f_v(S) = \max_{u \in L} d(u,v) - \min_{u \in S} d(u,v)$ and $d(u,v)$ is the Manhattan distance between $u$ and $v$.
Then, the goal of the problem is to maximize $F(S)$ subject to $|S| \leq k$.

\textbf{Discogs~\cite{KONECT}.}
This dataset provides information about audio records as a bipartite graph $G=(L,R; E)$, where each edge $(u,v) \in L \times R$ indicates that a label $v$ was involved in the production of a release of a style $u$. We have $|L| = 383$ and $|R| = 243,764$, and $|E| = 5,255,950$.
Consider selecting $k$ styles that cover the activity of as many labels as possible.
To formalize this problem, we constructed a maximum coverage function $F:  2^L \to \mathbb{R}$ as $F(S) = \sum_{v \in R}f_v(S)$, where $f_v(S)$ is $1$ if $v$ has a neighbor in $S$ and $0$ otherwise.
Then, the goal is to maximize $F(S)$ subject to $|S| \leq k$.


 \begin{figure}
 \centering
  \includegraphics[width=.5\hsize]{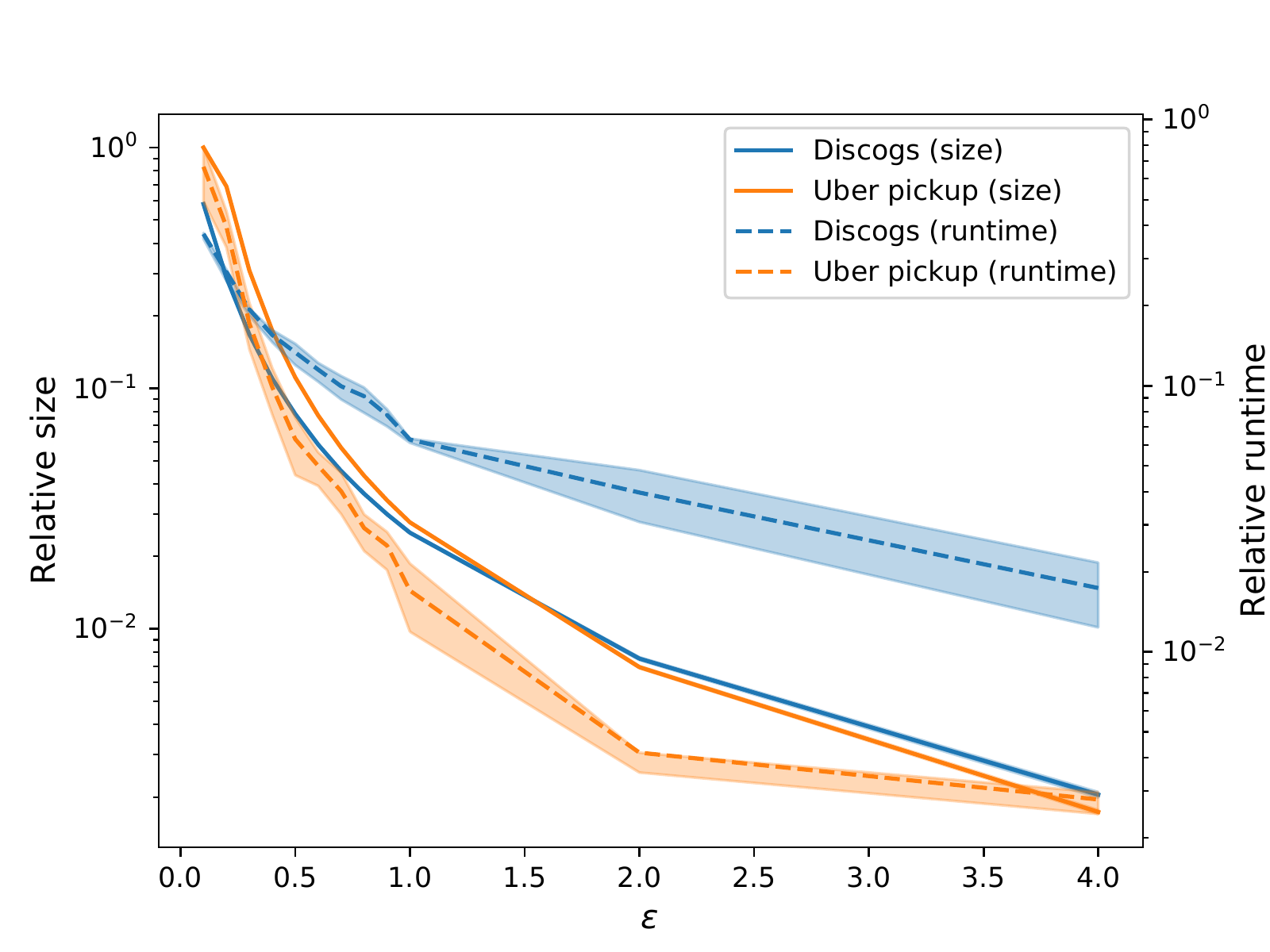}
  \caption{Relative size of sparsifiers and relative runtime of the greedy method on sparsifiers.}\label{fig:size}
   \end{figure}

Figure~\ref{fig:performance} shows the objective value of the solution obtained by the greedy method on the sparsifier relative to that on the original input function with its 25th and 75th percentiles.
Although our theoretical results do not give any guarantee when $\epsilon > 1$, we tried constructing our sparsifier with $\epsilon > 1$ to see its performance.
The solution quality of our sparsifier for Uber pickup is more than 99.9\% even when $\epsilon = 4$, and that for Discogs is more than 90\% performance when $\epsilon \leq 1.0$.
The performance for Uber pickup is higher than that for Discogs because the objective function of the former saturates easily.
These results suggest that we get a reasonably good solution quality by setting $\epsilon = 1$.

\paragraph{Number of functions and speedups.}
Figure~\ref{fig:size} shows the size, that is, the number of functions with positive weights, of our sparsifier relative to that of the original function and the runtime of the greedy method on the sparsifier relative to that on the original function with their 25th and 75th percentiles when $k = 8$.
The size and runtime are decreased by a factor of 30--50 when $\epsilon = 1$.
To summarize, our experimental results suggest that our sparsifier highly compresses the original function without sacrificing the solution quality.

\section*{Conclusion}
Decomposable submodular functions appear in many data intensive tasks in machine learning and data mining. Thus, having a sparsifying algorithm is of great interest. In this paper, we introduce the notion of sparsifier for decomposable submodular functions. We propose the first sparsifying algorithm for these types of functions which outputs accurate  sparsifiers with expected size independent of the size of original function. Our algorithm can be adapted to sparsify mixtures of submodular functions.
We also study the effectiveness of our algorithm under various constraints such as matroid and cardinality constraints. Our experimental results complement our theoretical results on real world data. This work does not present any foreseeable societal consequence.

\section{Missing proofs}

\subsection{Proof of Claim \ref{claim:up-p_i}}
\begin{proof}
\begin{align*}
        \sum\limits_{i\in D} p_i &= \sum\limits_{i\in D} \max\limits_{A\subseteq E}\frac{f_i(A)}{F(A)}
        = \sum\limits_{i\in D}\max\limits_{A\subseteq E}\frac{f_i(A)}{\sum\limits_{j \in D}f_j(A)}\\
        &= \sum\limits_{i\in D}\max\limits_{A\subseteq E}\frac{ \max\limits_{\bmy\in \mc{B}(f_i)}\ip{\bmy,\bmone_A}}{\sum\limits_{j\in D}\max\limits_{\bmy\in \mc{B}(f_j)}\ip{\bmy,\bmone_A}}\\
        &\leq \sum\limits_{i\in D}\max\limits_{A\subseteq E}\frac{\sum\limits_{\bmy\in \mc{B}(f_i)}\ip{\bmy,\bmone_A}}{\sum\limits_{j\in D}\frac{1}{|\mc{B}(f_j)|}\sum\limits_{\bmy\in \mc{B}(f_j)}\ip{\bmy,\bmone_A}}\\
        &\leq \sum\limits_{i\in D}\max\limits_{A\subseteq E}\max\limits_{e\in A}\frac{\sum\limits_{\bmy\in \mc{B}(f_i)}\bmy(e)}{\sum\limits_{j\in D}\frac{1}{|\mc{B}(f_j)|}\sum\limits_{\bmy\in \mc{B}(f_j)}\bmy(e)}
        \\
        &= \sum\limits_{i\in D}\max\limits_{e\in E}\frac{\sum\limits_{\bmy\in \mc{B}(f_i)}\bmy(e)}{\sum\limits_{j\in D}\frac{1}{|\mc{B}(f_j)|}\sum\limits_{\bmy\in \mc{B}(f_j)}\bmy(e)}\\
        &\leq \sum\limits_{i\in D}\sum\limits_{e\in E}\frac{\sum\limits_{\bmy\in \mc{B}(f_i)}\bmy(e)}{\sum\limits_{j\in D}\frac{1}{|\mc{B}(f_j)|}\sum\limits_{\bmy\in \mc{B}(f_j)}\bmy(e)}\\
        &= \sum\limits_{e\in E}\sum\limits_{i\in D}\frac{\sum\limits_{\bmy\in \mc{B}(f_i)}\bmy(e)}{\sum\limits_{j\in D}\frac{1}{|\mc{B}(f_j)|}\sum\limits_{\bmy\in \mc{B}(f_j)}\bmy(e)}\\
        &\leq \sum\limits_{e\in E}\max\limits_{j\in D}|\mc{B}(f_j)|\sum\limits_{i\in D}\frac{\sum\limits_{\bmy\in \mc{B}(f_i)}\bmy(e)}{\sum\limits_{j\in D}\sum\limits_{\bmy\in \mc{B}(f_j)}\bmy(e)}\\
        & \leq \sum\limits_{e\in E}\max\limits_{j\in D}|\mc{B}(f_j)|\frac{\sum\limits_{i\in D}\sum\limits_{\bmy\in \mc{B}(f_i)}\bmy(e)}{\sum\limits_{j\in D}\sum\limits_{\bmy\in \mc{B}(f_j)}\bmy(e)}\\
        & = \sum\limits_{e\in E}\max\limits_{j\in D}|\mc{B}(f_j)|
        = n\cdot(\max_{j\in D}|\mc{B}(f_j)|).
    \end{align*}
\end{proof}

\subsection{Proof of Theorem \ref{thm:k-sparsifier}}

\begin{proof}
    The proof is almost identical to the proof of Lemma~\ref{lem:SS}. We prove that for every $S\in \mc{I}$ with high probability it holds that $(1-\epsilon)F'(S) \leq F^L(S) \leq (1+\epsilon)F'(S)$. 
    Observe that by our choice of $p_i$ and $\bmw_i$ we have $\E[F'(S)]=F(S)$, for all subsets $S\in \mc{M}$. Consider a subset $S_k$. Using \Cref{lem:Chernoff}, we have
    \begin{align}
        &\bbP\left[ |F'(S_k)-\E[F'(S_k)]| \geq \epsilon\E[F'(S_k)]\right]\\ &= \bbP\left[ |F'(S_k)-F(S_k)| \geq \epsilon F(S_k)\right]\\
        &\leq 2\exp{\left(\nicefrac{-\epsilon^2 F(S_k)}{3a}\right)}
        \label{eq:k-concentration}
    \end{align}
    where $a=\max_{i}\bmw_i f_i(S_k)$. We bound the right hand side of~\eqref{eq:k-concentration} by providing an upper bound for $a$.
    \begin{align*}
    \label{eq:k-upper-bound-a}
        a &=\max\limits_{i}\bmw_i f_i(S_k) = \max\limits_{i} \frac{f_i(S_k)}{\kappa\cdot p_i}\\
        &=\max\limits_{i} \frac{f_i(S_k)}{\kappa\cdot\max\limits_{A\in\mc{I}} \frac{f_i(A)}{F(A)}}
        \\
        &\leq \max\limits_{i} \frac{f_i(S_k)}{ \kappa\cdot\frac{f_i(S_k)}{F(S_k)}}
        = \frac{F(S_k)}{\kappa}.
    \end{align*}
    Given the above upper bound for $a$ and the inequality in~\eqref{eq:k-concentration} yields
    \begin{align*}
         &\bbP\left[ |F'(S_k)-F(S_k)| \geq \epsilon F(S_k)\right] \leq 2\exp{\left(-\frac{\epsilon^2 F(S_k)}{3a}\right)} \\
         &\leq 2\exp{\left(-\frac{\epsilon^2 F(S_k)}{3 F(S_k)/\kappa}\right)} = 2\exp{\left(\nicefrac{-\kappa\epsilon^2}{3}\right)}
    \end{align*}
    Recall that $\kappa= 3\log (2n^{r+1}/\delta)/\epsilon^2$. 
    Note that there are at most $n^r$ sets in a matroid of rank $r$. Taking a union bound over all $n^r$ subsets yields that Algorithm~\ref{alg:k-sparsification} with probability at least $1-\delta$ returns a sparsifier for $F$ over the matroid. Similar to Lemma~\ref{lem:UpperBound-p_i}, $\sum_{i\in D}p_i\leq n\cdot(\max_{i\in D}|B(f_i)|)$, and having $\kappa=3\log (2n^{r+1}/\delta)/\epsilon^2$ gives
    \[
        \E[\mathrm{size}(\bmw)] \leq \sum_i \kappa p_i\leq O\left(\frac{rn}{\epsilon^2} \cdot \max\limits_{i\in D}|B(f_i)|\right).
    \]
\end{proof}

\subsection{Proof of Theorem \ref{thm:matroid-pi-approximation}}

\begin{proof}
  The proof is almost the same as that of Theorem~3.5 (Theorem~3.5 in~\cite{BaiIWB16}).
  Therefore, we only explain modifications we need to handle a matroid constraint.

  In the algorithm used in Theorem~3.5,
  given a monotone modular function $f:2^E \to \mathbb{R}_+$, a monotone submodular function $g:2^E \to \mathbb{R}_+$, and a threshold $c \in \mathbb{R}_+$, we iteratively solve the following problem:
  \begin{align}
    \begin{array}{ll}
    \text{minimize} & f(X), \\
    \text{subject to} & g(X) \geq c.
    \end{array}\label{eq:unconstrained}
  \end{align}
  We say that an algorithm for solving~\eqref{eq:unconstrained} a \emph{$(\sigma,\rho)$-bicriterion algorithm} if it outputs $X \subseteq E$ such that $f(X) \leq \sigma f(X^*)$ and $g(X) \geq \rho c$, where $X^*$ is the optimal solution.
  It is shown in~\cite{BaiIWB16} that a $(\sigma,\rho)$-bicriterion algorithm for constant $\sigma$ and $\rho$ leads to an $O(\sqrt{n}\log n)$-approximation algorithm for maximizing $g(X) / f(X)$.

  If we have an additional matroid constraint $\mathcal{M} = (E, \mathcal{I})$, we need a bicriterion algorithm for the following problem:
  \begin{align}
    \begin{array}{ll}
    \text{minimize }& f(X), \\
    \text{subject to} & g(X) \geq c, \\
    & X \in \mathcal{I}.
    \end{array}\label{eq:matroid-constrained}
  \end{align}
  To solve~\eqref{eq:matroid-constrained}, we consider the following problem.
  \begin{align}
    \begin{array}{ll}
    \text{maximize }& g(X), \\
    \text{subject to} & f(X) \leq d, \\
    & X \in \mathcal{I}.
    \end{array}\label{eq:matroid-constrained-dual}
  \end{align}
  This problem is a monotone submodular function maximization problem subject to an intersection of a matroid constraint and a knapsack constraint (recall that $f$ is modular), and is known to admit $\alpha$-approximation for some constant $\alpha$~\cite{Vondrak2011}.
  Then by computing an $\alpha$-approximate solution $X$ for every $d$ of the form $2^i$, and take the minimum $d$ such that $g(X) \geq \alpha \cdot c$, we obtain a $(1,\alpha)$-bicriterion approximation to~\eqref{eq:matroid-constrained}, as desired.
\end{proof}

\subsection{Proof of Theorem~\ref{thm:spectral-sparsification}}
\begin{proof}
It follows from the fact that $F^L(\bmx)$ is a linear combination of $F(S)$, $S\subseteq E$. More precisely, for a decomposable submodular function $F=\sum_{i\in D} f_i$, its Lov\'{a}sz extension is 
    \begin{align}
        F^L(\bmx) &= \sum\limits_{j=0}^{n} \sum\limits_{i\in D} (\bmx_{\pi(j)}-\bmx_{\pi(j+1)})f_i(S_j^\pi)\\
        &=  \sum\limits_{j=0}^{n}  (\bmx_{\pi(j)}-\bmx_{\pi(j+1)})\sum\limits_{i\in D} f_i(S_j^\pi)\\
        &= \sum\limits_{j=0}^{n}  (\bmx_{\pi(j)}-\bmx_{\pi(j+1)}) F(S_j^\pi)
        \label{eq:spectral-equality}
    \end{align}
 Now since our sparsifier approximates $F(S)$ for all subsets $S\subseteq E$ we have
 
 \begin{align}
 \label{eq:approx-S-j}
    (1-\epsilon)F'(S_j^\pi) \leq F(S_j^\pi) \leq (1+\epsilon)F'(S_j^\pi)   \quad 0\leq j\leq n
 \end{align}
 Finally, \eqref{eq:spectral-equality} and \eqref{eq:approx-S-j} yield the following
 \begin{align*}
     (1-\epsilon)F'^L(\bmw) \leq F^L(\bmx) \leq (1+\epsilon)F'^L(\bmx)\quad \forall \bmx\in [0,1]^n.
 \end{align*}
\end{proof}

\appendix
\section*{Acknowledgments}
Akbar Rafiey was supported by NSERC. Yuichi Yoshida was
supported by  JST PRESTO Grant Number JPMJPR192B and JSPS KAKENHI Grant Number 20H05965.
\bibliographystyle{plainnat}
\label{sec:reference_examples}
\bibliography{References}
\clearpage
\newpage

\end{document}